\renewcommand{\H}{\mathcal{H}}
\def\>{\rangle}
\def\<{\langle}
\def\kk{\>\!\>}
\def\bb{\<\!\<}
\newcommand{\Tr}{\operatorname{Tr}}
\newtheorem{theo}{Theorem}
\newtheorem{lemma}{Lemma}
\newcommand{\E}{\mathbb{E}}
\newcommand{\id}{\operatorname{id}}
\newcommand{\cE}{\mathcal{E}}
\newcommand{\cN}{\mathcal{N}}
\newcommand{\cC}{\mathcal{C}}
\newcommand{\norm}[1]{\|#1\|}
\newcommand{\dnorm}[1]{\|#1\|_\diamond}
\DeclareMathOperator{\irr}{Irr}
\newcommand{\gend}{g_{\operatorname{end}}}
\newcommand{\G}{\mathbb{G}}
\newcommand{\cem}{\cE_{\operatorname{M}}}
\newcommand{\esp}{\cE_{\operatorname{SP}}}
\newcommand{\red}[1]{{\color{black} #1}}
\newcommand{\blue}[1]{{\color{black} #1}}
\newcommand{\B}{\mathcal{B}}
\newcommand{\M}{\mathcal{M}}
\newcommand{\m}{{\operatorname{m}}}
\newcommand{\sep}{\operatorname{sep}}
\newcommand{\inv}{\operatorname{inv}}
\newcommand{\iswap}{\mathrm{\lowercase{i}SWAP} }
\newcommand{\cnot}{\mathrm{CNOT} }
\newcommand{\ad}{\mathrm{AD} }
\newcommand{\dep}{\mathrm{dep} }
\newcommand{\xy}{\mathrm{XY} }
\newcommand{\cU}{\mathcal{U}}
\title{A framework for randomized benchmarking over compact groups}
\author[1]{Linghang Kong\thanks{linghang@mit.edu}}
\affil[1]{Center for Theoretical Physics, MIT, Cambridge, MA 02139, United States}
\date{}
\begin{document}
\maketitle
\begin{abstract}
    Characterization of experimental systems is an essential step in developing and improving quantum hardware. A collection of protocols known as Randomized Benchmarking (RB) was developed in the past decade, which provides an efficient way to measure error rates in quantum systems. In a recent paper~\cite{helsen2020}, a general framework for RB was proposed, which encompassed most of the known RB protocols and overcame the limitation on error models in previous works. However, even this general framework has a restriction: it can only be applied to a finite group of gates. This does not meet the need posed by experiments, in particular the demand for benchmarking non-Clifford gates and continuous gate sets on quantum devices. In this work we generalize the RB framework to continuous groups of gates and show that as long as the noise level is reasonably small, the output can be approximated as a linear combination of matrix exponential decays. As an application, we numerically study the fully randomized benchmarking protocol (i.e. RB with the entire unitary group as the gate set) enabled by our proof. This provides a unified way to estimate the gate fidelity for any quantum gate in an experiment.
\end{abstract}

\section{Introduction}
The past decade has witnessed great developments of large-scale quantum computers. Besides improving techniques for building quantum hardware, significant efforts have been devoted to developing tools for characterizing them. An essential set of such tools is collectively known as \emph{Randomized Benchmarking (RB)}, whose central idea is to generate sequences of random gates with varying lengths and study the dependence of the overall noise level on the sequence length. The main advantages of RB over other benchmarking tools include: 
\begin{enumerate}
    \item Robustness against state preparation and measurement (SPAM) error
    \item Efficiency, in the sense that most RB schemes require polynomial amount of resources in the number of qubits.
\end{enumerate}
Here we will first introduce some of the best-known RB protocols. Standard RB is a protocol originally proposed in~\cite{emerson2005,levi2007}, which generates sequences of Haar random unitaries followed by an inverting gate that would evolve the system back to the initial state. It was shown that the survival probability (i.e. the probability of getting the initial state after the gate sequence) decays exponentially with the sequence length, and the decay rate represents the error rate per gate. Later it was showed in~\cite{dankert2009,magesan2011} that as long as the gates are sampled from a unitary 2-design (i.e. the first two moments of the gate distribution matches that of the Haar measure), an exponential decay can be observed. A common choice of a 2-design is the uniform distribution over the Clifford group, and the corresponding RB protocol is also known as Clifford RB. A related protocol is interleaved randomized benchmarking~\cite{magesan2012}, which generates sequences that are interleaved with some specific gate in addition to the original random sequences. From the change in error rate caused by the addition of the interleaved gate, one can determine the gate fidelity. Note that when the gate set forms a group, it is often required that the interleaved gate lie in the gate set, so that the ending gate is also contained in the gate set (there are exceptions, e.g.~\cite{erhard2019,harper2017}, but they are designed for specific cases and cannot be generalized). For example, interleaved Clifford RB can only be used for estimating the fidelity of Clifford gates. 

All these protocols rely on the exponential decaying behavior of the output, which depends on the noise model. In early works of RB (see~\cite{emerson2005} for example), the error model is assumed to be gate-independent, which is often too stringent for practical experimental systems. In recent years, gate dependent noise has been analyzed based on Fourier transformation over groups~\cite{hashagen2018,wallman2018,merkel2021}. A recent paper~\cite{helsen2020} proposed a general framework of RB, which states that the output result can be approximated by a linear combination of matrix exponential decays even when the noise channel depends on the gate, as long as the error rate is small enough. Also, this framework encompasses most of the known RB protocols, which provides a unified way of analyzing them. 

However, one limitation of this result is that the gate set in this framework has to be a finite group, which makes it difficult to benchmark non-Clifford gates and continuous gate sets in a unified way. But such gate sets play essential roles in practical quantum computing. Non-Clifford gates are necessary for universal quantum computing, as Clifford gates alone are efficiently simulable classically and cannot generate arbitrary quantum gates. Continuous gate sets are often implemented as the elementary gates in experiments (for example $\xy$-gates~\cite{abrams2020} and fSim-gates~\cite{foxen2020}), and they are frequently used in near-term quantum algorithms such variational quantum eigensolvers~\cite{peruzzo2014} and QAOA~\cite{farhi2014}. This poses the need for a protocol to benchmark such gates in a unified way.

In order to accommodate the need for benchmarking a general quantum gate, in this work we will extend the uniform-probability framework in~\cite[Thm.~8]{helsen2020} to a gate set that forms a compact Lie group. See our Thm.~\ref{thm:main} for the exact statement. As an application, this allows us to do \emph{fully randomized benchmarking (FRB)}, namely RB on the set of all unitary operators, whose interleaved version could measure the gate fidelity of any quantum gate. The main difference between a finite group and a compact group is that a compact group has infinitely many irreducible representations. To prove our result we generalize a matrix perturbation theorem (Thm.~6 in~\cite{helsen2020}) to an infinite dimensional Hilbert space setting and then apply it to the infinite dimensional Fourier space.

Besides proving the main theorem, we also conduct a numerical study of FRB, whose validitiy is established by our main result. We compared the output stability of FRB and Clifford RB on benchmarking tasks that can be done by both protocols, and found that FRB has comparable stability as Clifford RB. We also studied the performance of FRB on a benchmarking task beyond the capability of Clifford RB, namely estimating the fidelity of gates in a continuous family using interleaved FRB. We found that the result given by the numerical simulation of FRB matches the theoretical prediction pretty well.

The paper is structured as follows. After introducing known results about RB, in Section~\ref{sec:main} we will present our main result, which is a generalization of~\cite[Thm.~8]{helsen2020}. Section~\ref{sec:numerics} is a numerical study of FRB. Finally in Section~\ref{sec:discussion} we conclude the paper with a discussion.

\subsection{Preliminaries}
\label{sec:pre}

For two quantum states $\rho$ and $\sigma$, their fidelity is defined as
\begin{equation}
    F(\rho, \sigma) = \|\sqrt\rho\sqrt\sigma\|_1^2 = \left(\Tr\sqrt{\sqrt{\rho}\sigma\sqrt{\rho}}\right)^2.
\end{equation}
We can then define the average gate fidelity based on this state fidelity. Suppose that a quantum gate $U$ is implemented as a quantum channel $\cU(\cdot)$, then the average gate fidelity $F_g$ is defined as the average fidelity between the ideal and actual output state when the input state is Haar random,
\begin{equation}
    F_g = \E_{|\psi\>\sim \text{Haar}}F(U|\psi\>\<\psi|U^\dagger, \cU(|\psi\>\<\psi|)).
\end{equation}
Suppose that $\cU$ is given by the application of $U$ followed by a noise channel $\Lambda$, i.e.
\begin{equation}
    \cU(\rho) = \Lambda(U\rho U^\dagger),
\end{equation}
and that $\Lambda$ has Kraus operators $A_i$,
\begin{equation}
    \Lambda(\rho) = \sum_k A_k \rho A_k^\dagger,
\end{equation}
then one can know that~\cite{emerson2005}
\begin{equation}
    F_g = \frac{\sum_k |\Tr A_k|^2 + D}{D^2 + D} \label{eq:fidelity}
\end{equation}
where $D$ is the dimension of the system. $F_g$ depends only on the error channel $\Lambda$ and is independent of the gate $U$.

For the standard RB protocol, it is known that the survival probability $p_m$ after a length-$m$ gate sequence takes the form~\cite{emerson2005}
\begin{equation}
    p_m = \alpha^m A + B,\quad \alpha = \frac{\sum_k |\Tr A_k|^2 - 1}{D^2 - 1}
\end{equation}
if the noise is gate independent. Then by finding $\alpha$ through numerical fitting, one could calculate the average gate fidelity of the gate set by
\begin{equation}
    F_g = \alpha + \frac{1-\alpha}{D}.
\end{equation}

Interleaved RB~\cite{magesan2012} works similarly. Let $\alpha$ and $\alpha'$ be the decay rate of survival probability for the original and interleaved sequences respectively. Then the gate fidelity for the interleaved gate is given by
\begin{equation}\label{eq:irb}
    F_g = 1 - \frac{D-1}{D}\left(1 - \frac{\alpha'}{\alpha}\right).
\end{equation}

\section{Main Results}
\label{sec:main}

\subsection{Description of the Framework}
\label{sec:framework}
Here we will briefly describe our RB framework. The only difference between ours and the one in~\cite[Thm.~8]{helsen2020} is that the gate set is allowed to be a general compact Lie group. An RB protocol based on this framework is defined by the following parameters:
\begin{itemize}
    \item A gate set $\G$, from which the gate sequences are generated. $\G$ can be a finite group or a general compact Lie group. The gates will be sampled uniformly from the Haar measure over $\G$, which is the probability distribution $\nu$ such that $\E_{g\sim \nu} f(g) = \E_{g\sim\nu} f(gh)$ for any $h\in\G$ and any function $f$ over $\G$, and is equal to the uniform distribution when $\G$ is finite.
    \item A reference implementation $\omega(\cdot)$, which is a mapping from $\G$ to superoperators on the quantum system. On the linear space of all operators on the quantum system, $\omega$ is a representation of $\G$. Here we require that the quantum system is finite-dimensional for simplicity.
    \item An ending gate $\gend \in \G$, which is the overall effect of the gate sequence. In other words, every generated gate sequences $g_1,\ldots,g_t$ should satisfy $g_t\cdot \ldots \cdot g_2 \cdot g_1 = \gend$, where $\cdot$ is the multiplication in the group. In most RB protocols $\gend$ is the identity.
    \item A set of sequence lengths $\mathbb{M}$. Note that in our convention, the sequence length $m\in\mathbb{M}$ refers to the number of random gates and does not include the final inversion gate.
    \item An input state of the quantum system.
    \item An output POVM $\{\Pi_i\}_i$ that is applied to the system after each gate sequence.
\end{itemize}

In an experiment, the implementation of group element will be denoted by $\phi(g)$, which is a noisy version of $\omega(g)$. The RB experiment is run with the following steps:
\begin{enumerate}
    \item For each $m\in\mathbb{M}$ do the following:
    \begin{enumerate}
        \item Initialize the quantum system.
        \item Sample $g_1, g_2,\ldots, g_m$ uniformly from $\G$ and apply $\phi(g_1),\phi(g_2),\ldots, \phi(g_m)$ to the system.
        \item Calculate $g_{\inv} = (g_m\cdot \ldots \cdot g_2 \cdot g_1)^{-1}$ and apply $\phi(\gend \cdot g_{\inv})$ to the system.
        \item Apply POVM $\{\Pi_i\}$ to get the measurement result.
        \item Repeat the experiment above many times with fixed $g_1,\ldots,g_m$ to estimate the output probability $p(i|g_1,\ldots, g_m)$.
        \item Repeat for many different samples of $g_1,\ldots,g_m$ and take average of $p(i|g_1,\ldots, g_m)$ to calculate $p(i,m)$.
    \end{enumerate}
    \item Do a numerical exponential decay fitting for $p(i,m)$
\end{enumerate}

By Thm.~\ref{thm:main}, $p(i,m)$ can be approximated by a linear combination of matrix exponential decays, and by numerical fitting one can know the error rate of the system.

\subsection{Proof}
In our theorem, the distance between superoperators is quantified using diamond norms $\|\cdot\|_\diamond$, defined by the following equation
\begin{equation}
    \|\cC_L\|_\diamond = \sup_{R;|\psi\>_{LR}} \|(\cC_L \otimes \id_R)(|\psi\>\<\psi|_{LR})\|_1
\end{equation}
where the supremum is taken over all reference systems $R$ and all normalized states $|\psi\>$. 

Our main result is Thm.~8 of~\cite{helsen2020} generalized to a compact group and can be stated as follows.

\begin{theo}\label{thm:main}
	Consider an RB experiment defined with compact Lie group $\G$ and reference representation $\omega = \bigoplus_{\lambda\in \Lambda} \sigma_\lambda^{\oplus n_\lambda}$. We denote by $\phi$ the implementation map, i.e. the operation acting on the quantum system. For a specific sequence length $m>0$, let $p(i,m)$ be the outcome probability for measurement result $i$. 
	If there exists $\delta>0$ such that
	\begin{equation}
	\int dg \dnorm{\omega(g) - \phi(g)} \leq \delta < \frac{1}{9},
	\end{equation}
	then the RB output probability $p(i,m)$ satisfies
	\begin{equation}
	\left|p(i,m) -\sum_{\lambda \in \Lambda} \Tr(A_\lambda (M_\lambda)^m)\right| \leq \frac{16}{1-9\delta}\left[\delta\left(2+ \frac{4\delta}{1-5\delta}\right)\right]^m.
	\end{equation}
	where $M_\lambda$ and $A_\lambda$ are $n_\lambda\times n_\lambda$ matrices, and $M_\lambda$ only depends on the implementation $\phi$.

\end{theo}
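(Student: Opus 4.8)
The plan is to mirror the architecture of the proof of \cite[Thm.~8]{helsen2020}, replacing its finite-dimensional linear algebra by operator theory on an infinite-dimensional Hilbert space. First I would rewrite the averaged outcome probability as a power of a single \emph{transfer operator}: substituting the partial products $u_j = g_j\cdot\ldots\cdot g_1$ (which are i.i.d.\ Haar-distributed by right-invariance of the Haar measure) as new integration variables and inserting factors $\omega(u_j)^{-1}\omega(u_j)$ between the superoperators, the composition $\phi(\gend u_m^{-1})\circ\phi(g_m)\circ\cdots\circ\phi(g_1)$ telescopes, so that after taking the expectation $p(i,m)=\langle\langle L|\,\mathcal{T}^{m}\,|R\rangle\rangle$, where $|R\rangle\rangle$ encodes the input state, $\langle\langle L|$ encodes $\gend$ and the POVM element $\Pi_i$, and $\mathcal{T}$ is an $m$-independent operator built by averaging $\phi$ against $\omega$. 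The structural novelty relative to the finite-group case is that, for a compact $\G$, $\mathcal{T}$ acts on the Peter--Weyl (Fourier) space of $\G$, which is infinite-dimensional, indexed by the whole set $\irr(\G)$. From $\dnorm{\omega(g)}=1$ and the hypothesis one obtains $\int dg\,\dnorm{\phi(g)}\le 1+\delta$, so that $\mathcal{T}$ is bounded and $\|\mathcal{T}-\mathcal{T}_0\|\le c\,\delta$ for an absolute constant $c$, where $\mathcal{T}_0$ denotes the ideal transfer operator obtained by setting $\phi=\omega$.

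Next I would determine the spectrum of $\mathcal{T}_0$. Since $\omega$ is a homomorphism the ideal composition collapses to $\omega(\gend)$, so $\mathcal{T}_0$ is idempotent --- this is the ``flat survival probability'' of noiseless RB. Moreover $\omega$ is band-limited to $\Lambda$, so its convolution action annihilates every Fourier mode outside $\Lambda$; hence $\mathcal{T}_0$ has finite rank, its spectrum equals $\{0,1\}$, its eigenvalue-$1$ eigenspace is finite-dimensional and graded by the finitely many $\lambda\in\Lambda$, and the infinite-dimensional remainder is the kernel. The decisive feature is that the ``uninteresting'' part of $\operatorname{spec}(\mathcal{T}_0)$ is the single point $0$, at distance $1$ from the eigenvalue cluster, uniformly over the infinite-dimensional ambient space.

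The main obstacle is the third step: upgrading the matrix perturbation result \cite[Thm.~6]{helsen2020} to bounded operators on a separable Hilbert space. The tool is the Riesz spectral projection $P=\frac{1}{2\pi i}\oint_{\Gamma}(\zeta I-\mathcal{T})^{-1}\,d\zeta$, with $\Gamma$ a circle of fixed radius around $1$. Because $(\zeta I-\mathcal{T}_0)^{-1}$ is uniformly bounded on $\Gamma$ --- the only unperturbed spectrum away from $1$ being $0$ --- a Neumann series in $\mathcal{T}-\mathcal{T}_0$ shows that $(\zeta I-\mathcal{T})^{-1}$ is bounded on $\Gamma$ as well, precisely when $\delta<\tfrac{1}{9}$, which is exactly the hypothesis that makes the relevant geometric series converge. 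This produces a $\mathcal{T}$-invariant splitting $\mathbb{V}=P\mathbb{V}\oplus(1-P)\mathbb{V}$ with $P\mathbb{V}$ finite-dimensional; the estimate $\|\mathcal{T}|_{(1-P)\mathbb{V}}\|\le\delta\big(2+\tfrac{4\delta}{1-5\delta}\big)$ on the ``fast'' block follows from $(1-P)\mathcal{T}_0(1-P)=(1-P)(P_0-P)(1-P)$ together with $\|P-P_0\|=O(\delta)$ and $\|\mathcal{T}-\mathcal{T}_0\|\le c\delta$; and since $P$ is norm-close to $P_0$, $\mathcal{T}|_{P\mathbb{V}}$ decomposes along the $\Lambda$-grading into the $n_\lambda\times n_\lambda$ blocks $M_\lambda$, the noise-induced couplings between distinct sectors being absorbed into the fast part. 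The point to verify is that every ingredient of the finite-dimensional argument --- existence and norm bound for the spectral projection, the geometric estimates, the block decomposition --- survives the passage to infinite dimensions; it does, because the unperturbed uninteresting spectrum degenerates to a single point far from $1$ and $\Lambda$ is finite, so no spectral accumulation can creep toward $1$.

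Finally I would assemble the estimate: write $\mathcal{T}^{m}=P\mathcal{T}^{m}P+(1-P)\mathcal{T}^{m}(1-P)$, identify $\langle\langle L|P\mathcal{T}^{m}P|R\rangle\rangle=\sum_{\lambda\in\Lambda}\Tr(A_\lambda M_\lambda^{m})$ with $A_\lambda$ assembled from the overlaps of $|R\rangle\rangle$ and $\langle\langle L|$ with the $\lambda$-block of $P\mathbb{V}$, and bound the remainder $\langle\langle L|(1-P)\mathcal{T}^{m}(1-P)|R\rangle\rangle$ by $\|\langle\langle L|(1-P)\|\cdot\|\mathcal{T}|_{(1-P)\mathbb{V}}\|^{m}\cdot\|(1-P)|R\rangle\rangle\|$. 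The boundary norms are bounded, up to factors $1+O(\delta)$, by $\|\Pi_i\|_\infty\le 1$ and the normalization of the input state, while $\|P\|$ and $\|1-P\|$ are bounded by a constant multiple of $1/(1-9\delta)$ by the Riesz-projection estimate; the product of these factors is at most $16/(1-9\delta)$, which yields the stated inequality. Since $M_\lambda$ is built solely from $\mathcal{T}$, hence from $\phi$ and $\omega$, it depends only on the implementation, as asserted.
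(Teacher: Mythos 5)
Your overall architecture --- pass to Fourier/transfer-operator form, note that the ideal part $F(\omega)$ is a finite-rank projector supported on the finitely many $\lambda\in\Lambda$, split off a finite-dimensional ``slow'' block and bound the ``fast'' remainder by $(\mathrm{const}\cdot\delta)^m$ --- is the same as the paper's. The genuine gap is your implicit assumption that there is a \emph{single} operator norm on the Fourier space in which (i) the perturbation is $O(\delta)$ with an absolute constant and (ii) the output functional and SPAM vectors are bounded, so that a Riesz projection $P=\frac{1}{2\pi i}\oint(\zeta-\mathcal{T})^{-1}d\zeta$ and the final estimate $|\bb L|(1-P)\mathcal{T}^m(1-P)|R\kk|\le\|\bb L|(1-P)\|\,\|\mathcal{T}|_{(1-P)}\|^m\,\|(1-P)|R\kk\|$ make sense. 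The hypothesis only controls the \emph{average} diamond distance, $\int dg\,\dnorm{\omega(g)-\phi(g)}\le\delta$, and the paper accordingly works with two different norms: $\|F(\phi)\|_{\m}=\E_g\dnorm{\phi(g)}$, in which the perturbation is $\le\delta$, and $\|F(\phi)\|_{\max}=\max_g\dnorm{\phi(g)}$, in which it is only $\le 2$; they are linked by the mixed submultiplicativity $\|FF'\|_{\max}\le\|F\|_{\max}\|F'\|_{\m}$. The final contraction is not a pairing with bounded vectors but the inverse Fourier transform evaluated at the single element $\gend$, weighted by $D_\G$ whose entries $d_\lambda$ are unbounded over $\irr(\G)$ for a compact Lie group; point evaluation is exactly what the $\max$-norm controls and what no $L^2$-type operator norm controls (and if you retreat to the genuine spectral norm on Peter--Weyl space, $\|\mathcal{T}-\mathcal{T}_0\|\le c\delta$ holds only with dimension-dependent $c$, and the $D_\G$/evaluation problem remains). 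So the step ``$(\zeta I-\mathcal{T})^{-1}$ is bounded on $\Gamma$ precisely when $\delta<1/9$'' and the concluding sandwich bound do not go through in any norm you have actually exhibited: the $\m$-norm is submultiplicative and sees a small perturbation but does not bound the output at $\gend$; the $\max$-norm bounds the output but the perturbation is not small in it.

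The paper's resolution is not a contour integral but an infinite-dimensional Stewart-type block diagonalization (Lemma~\ref{lem:perturb}, proved in Appendix~\ref{app:perturb} via a fixed-point solution of the quadratic operator equations for $P_1,P_2$), carried out entirely in the $\m$-norm, with explicit bounds such as $\|P_1\|_{\m}\le 4\delta/(1-5\delta)$; the remainder $\beta$ is then estimated by peeling off \emph{exactly one} factor in the $\max$-norm ($\|F(\phi-\omega)\|_{\max}\le 2$) and $m$ factors in the $\m$-norm via the mixed inequality, which is precisely where the constants $\frac{16}{1-9\delta}$ and $\delta\bigl(2+\frac{4\delta}{1-5\delta}\bigr)$ come from. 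Your sketch, even if repaired by running the resolvent argument in the $\m$-norm, would still need this two-norm bookkeeping to control the evaluation at $\gend$, and it is not clear it would reproduce the stated constants, since in your approach they are simply asserted rather than derived from bounds on $\|P-P_0\|$.
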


We need the following lemma about perturbation theory on an infinite dimensional Hilbert space, which is a generalization of Thm.~6 in~\cite{helsen2020}. Note we will be working in a Hilbert space that can potentially be of infinite dimension, but we will use block matrix notation for calculations involving multiple operators. This is fine by defining 
\begin{equation}
    [A, B]
    \begin{pmatrix}
        C & D \\
        E & F
    \end{pmatrix}
    = [G, H]
    \longleftrightarrow
    G=AC+BE, H = AD + BF
\end{equation}
and
\begin{equation}
    [L_1, L_2]^\dagger A [R_1, R_2] =
    \begin{pmatrix}
    A_{11} & A_{12}  \\
    A_{21} & A_{22}
    \end{pmatrix}
    \longleftrightarrow
    A_{ij} = L_i^\dagger A R_j, \forall i,j\in\{1,2\}. 
\end{equation}
We will also use the $\sep(\cdot)$ function. Consider a Hilbert space $\H$ and let $V_1$ and $V_2$ be orthogonal subspaces of $\H$. Let $A_1$ and $A_2$ be bounded operators on $V_1$ and $V_2$ respectively. Let $\M(V_1,V_2)$ be the set of all operators $P$ that map $V_2$ to $V_1$, i.e. $P=X_1 P X_2$ where $X_1$ and $X_2$ are projectors into $V_1$ and $V_2$. For a norm $\|\cdot\|$ defined on $\H$, we define the corresponding $\sep(\cdot)$ function as
\begin{equation}
    \sep(A_1,A_2) = \inf_{P\in \M(V_1, V_2)} \frac{\|A_1 P - P A_2\|}{\|P\|}.
\end{equation}
The function $\sep(\cdot)$ is known to be stable against perturbations
\begin{equation}
    |\sep(A_1 + E_1, A_2 + E_2) - \sep(A_1, A_2)| \le \|E_1\| + \|E_2\|.
\end{equation}
\begin{lemma}\label{lem:perturb}
    Consider a Hilbert space $\H$ with norm $\|\cdot\|$. This norm naturally induces a norm on operators.
    Let $V_1$ be a \blue{finite dimensional} subspace of $\H$ and let $V_2$ be its orthogonal complement. Let $X_1$ and $X_2$ be the projectors into $V_1$ and $V_2$ respectively. Then we define $A_1 = A= X_1$ and $A_2 = 0$, so $\sep(A_1, A_2) = \sep(A_2, A_1) = 1$ and $A$ is block diagonal w.r.t. $X_1$ and $X_2$,
    \[
        [X_1, X_2]^\dagger A [X_1, X_2] =
        \begin{pmatrix}
            A_1 & 0 \\
            0 & A_2
        \end{pmatrix}.
    \]
    Let $E$ be an arbitrary operator. If $E$ has the following properties
    \begin{align}
     \mathrm{sep}(A_1,A_2)- \norm{X_1 E X_1} - \norm{X_2 E X_2} >& 0 \label{eq:sep} \\
    \frac{\norm{X_1 EX_2} \norm{X_2 E X_1}}{\big( \mathrm{sep}(A_1,A_2)- \norm{X_1 E X_1} - \norm{X_2 E X_2}\big)^2 } <& \frac{1}{4} \label{eq:pert_props}
    \end{align}
        then there exist operators $P_1$ and $P_2$ that satisfies
    \begin{align}
     X_2 P_1 X_1 &= P_1, X_1 P_2 X_2 = P_2, \\
    \norm{P_1} &\leq \frac{\red{2}\norm{X_2 E X_1}}{\mathrm{sep}(A_1,A_2) - \norm{X_1 E X_1} - \norm{X_2 E X_2}},\\
    \norm{P_2} &\leq \frac{\norm{X_2 E X_1}}{\mathrm{sep}(A_1,A_2) - \norm{X_1 E X_1 + X_1EX_2P_1} - \norm{X_2 E X_2- P_1X_1EX_2}}
    \end{align}
    and can turn $A+E$ into a block diagonal structure in the following way:
    \begin{equation}
    [L_1, L_2]^\dagger (A+E)[R_1,R_2] = \begin{pmatrix} A'_1 & 0 \\ 0 & A'_2\end{pmatrix}
    \end{equation}
    where
    \begin{align}
    [R_1,R_2] = [X_1,X_2] \begin{pmatrix} X_1 & 0 \\ P_1 & X_2\end{pmatrix}\begin{pmatrix} X_1 & P_2 \\ 0 & X_2\end{pmatrix} 
    \\
    [L_1,L_2]^\dagger = \begin{pmatrix} X_1 & -P_2 \\ 0 & X_2\end{pmatrix}\begin{pmatrix} X_1 & 0 \\ -P_1 & X_2\end{pmatrix}[X_1,X_2]^\dagger,
    \end{align}
    and
    \begin{align}
        \red{A'_1 = A_1 + X_1 E X_1  + X_1 E X_2 P_1}, \\
        A'_2 = A_2 + X_2 E X_2  -P_1 X_1 E X_2.
    \end{align}
Equivalently,
    we have
    \begin{equation}
    A+E = R_1 A'_1 L_1^\dagger + R_2 A'_2 L_2^\dagger.
    \end{equation}
\end{lemma}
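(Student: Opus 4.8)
The plan is to run the classical two-step block-diagonalization argument from matrix perturbation theory (the Stewart--Sun scheme), but carried out in the Banach space of bounded operators between $V_1$ and $V_2$ rather than on a finite matrix. Writing $E_{ij} := X_i E X_j$ for the blocks of $E$ relative to $\H = V_1 \oplus V_2$, the operator $A+E$ has block form $\begin{pmatrix} X_1 + E_{11} & E_{12} \\ E_{21} & E_{22}\end{pmatrix}$ since $A_1 = X_1$, $A_2 = 0$. First I would eliminate the $(2,1)$ block by the similarity $M \mapsto G_1^{-1} M G_1$ with $G_1 = \begin{pmatrix} X_1 & 0 \\ P_1 & X_2\end{pmatrix}$: a direct block computation shows the new $(2,1)$ entry vanishes exactly when $P_1 \in \M(V_2,V_1)$ solves the Riccati equation
\[
P_1 = E_{21} + E_{22}P_1 - P_1 E_{11} - P_1 E_{12} P_1,
\]
and that the resulting diagonal blocks are then precisely $A'_1 = X_1 + X_1 E X_1 + X_1 E X_2 P_1$ and $A'_2 = X_2 E X_2 - P_1 X_1 E X_2$, with the $(1,2)$ block left unchanged as $E_{12}$. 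Next I would remove that remaining block by $M' \mapsto G_2^{-1} M' G_2$ with $G_2 = \begin{pmatrix} X_1 & P_2 \\ 0 & X_2\end{pmatrix}$; because $M'$ is already block upper-triangular, its transformed $(1,2)$ entry is $A'_1 P_2 - P_2 A'_2 + E_{12}$, so it suffices that $P_2 \in \M(V_1,V_2)$ solve the linear Sylvester equation $A'_1 P_2 - P_2 A'_2 = -E_{12}$. Multiplying out $X G_1 G_2$ and $G_2^{-1}G_1^{-1}X^\dagger$ reproduces the asserted $[R_1,R_2]$ and $[L_1,L_2]^\dagger$; the inner factors cancel as mutual inverses, giving $[L_1,L_2]^\dagger[R_1,R_2] = \begin{pmatrix} X_1 & 0 \\ 0 & X_2\end{pmatrix}$ and $[R_1,R_2][L_1,L_2]^\dagger = \id$, and left/right multiplication of the block-diagonal identity then delivers $A+E = R_1 A'_1 L_1^\dagger + R_2 A'_2 L_2^\dagger$.

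The quantitative content is solving these two equations within the stated norm bounds. For the Riccati equation, set $\eta = \norm{X_2 E X_1}$, $\gamma = \norm{X_1 E X_2}$, and $\delta_0 = \sep(A_1,A_2) - \norm{X_1 E X_1} - \norm{X_2 E X_2}$ (positive by \eqref{eq:sep}, with $\sep(A_1,A_2) = 1$), and run the iteration $P^{(0)} = 0$, $P^{(k+1)} = E_{21} + E_{22}P^{(k)} - P^{(k)}E_{11} - P^{(k)}E_{12}P^{(k)}$ on the closed ball of radius $\rho := 2\eta/(\delta_0 + \sqrt{\delta_0^2 - 4\gamma\eta})$ in $\M(V_2,V_1)$. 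Condition \eqref{eq:pert_props} is precisely $4\gamma\eta < \delta_0^2$, so $\rho$ is real, equals the smaller root of $\gamma r^2 - \delta_0 r + \eta = 0$, satisfies $\rho \le 2\eta/\delta_0$, and also forces $2\gamma\rho < \delta_0$; a short estimate then shows the iteration maps this ball into itself and is a contraction with Lipschitz constant $\norm{X_1 E X_1} + \norm{X_2 E X_2} + 2\gamma\rho < 1$, so Banach's fixed point theorem gives a unique $P_1$ with $\norm{P_1} \le \rho \le 2\norm{X_2 E X_1}/(\sep(A_1,A_2) - \norm{X_1 E X_1} - \norm{X_2 E X_2})$. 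For $P_2$, note that the Sylvester operator $P \mapsto A'_1 P - P A'_2$ on $\M(V_1,V_2)$ equals $P \mapsto X_1 P = P$ perturbed by an operator of norm at most $\norm{X_1 E X_1 + X_1 E X_2 P_1} + \norm{X_2 E X_2 - P_1 X_1 E X_2}$, which is $< 1$ (it is at most $(1-\delta_0) + 2\gamma\rho$, and the stability of $\sep$ likewise shows $\sep(A'_1,A'_2) > 0$); hence it is invertible by Neumann series and $\norm{P_2}$ is bounded by $\norm{X_1 E X_2}$ over $1 - \norm{X_1 E X_1 + X_1 E X_2 P_1} - \norm{X_2 E X_2 - P_1 X_1 E X_2}$, i.e.\ by the denominator appearing in the claimed bound.

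I expect the real obstacle to be the infinite-dimensional bookkeeping rather than any new idea: one must check that every block $E_{ij}$ and every iterate $P^{(k)}$ is a genuinely bounded operator, so that operator norms, the Banach-space structure of $\M(V_2,V_1)$, completeness of the closed ball, and the contraction estimate are all legitimate; that $\sep(X_1,0) = \sep(0,X_1) = 1$; and that the perturbation inequality for $\sep$ holds verbatim on a possibly infinite-dimensional $\H$. The hypothesis that $V_1$ is finite-dimensional is what keeps $X_1 E X_1$, $E_{12}P_1$, and hence $A'_1$ honest finite-rank operators, while \eqref{eq:sep}--\eqref{eq:pert_props} implicitly force the remaining blocks to be bounded. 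Given all this, the verification that the explicit formulas for $A'_1$, $A'_2$, $R_i$, and $L_i$ come out exactly as stated reduces to a routine block-matrix computation, which I would present as a direct check.
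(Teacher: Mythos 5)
Your proposal is correct, and it follows the same two-step block-diagonalization scheme as the paper: eliminate the $(2,1)$ block with $P_1$ solving the quadratic (Riccati) equation, then the $(1,2)$ block with $P_2$ solving a Sylvester equation, with the formulas for $R_i$, $L_i$, $A'_1$, $A'_2$ coming out of the same routine block computation. Where you genuinely differ is in how the two equations are solved. The paper solves the Riccati equation by invoking the quoted Stewart--Sun result (its Lemma~\ref{lem:eqn}), whose hypothesis---invertibility of the linear operator $T(P)=P(A_1+X_1EX_1)-(A_2+X_2EX_2)P$ with $\|T^{-1}\|^{-1}$ identified with a $\sep$ value---is supplied by two auxiliary lemmas about $\sep$, and that is precisely where the finite-dimensionality of $V_1$ enters the paper's argument. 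You instead exploit $A_1=X_1$, $A_2=0$ to keep the linear part equal to the identity on $\M(V_2,V_1)$, absorb the terms $PE_{11}$ and $E_{22}P$ (writing $E_{ij}=X_iEX_j$) into the iteration, and run a direct Banach fixed-point argument on the ball of radius $\rho=2\eta/(\delta_0+\sqrt{\delta_0^2-4\gamma\eta})$; your self-map and contraction estimates are exactly right, since $\gamma\rho^2-\delta_0\rho+\eta=0$ and $2\gamma\rho<\delta_0$ both follow from $4\gamma\eta<\delta_0^2$, which is condition~\eqref{eq:pert_props}. Likewise your Neumann-series treatment of the $P_2$ equation reproduces the same denominator the paper obtains from the perturbation stability of $\sep$ (here $\sep(A_1,A_2)=1$). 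What your route buys: it is self-contained (effectively re-proving the quoted lemma in the special case needed), it dispenses with the paper's auxiliary invertibility lemmas, and it never actually uses $\dim V_1<\infty$---only boundedness of the blocks of $E$ and completeness of $\M(V_2,V_1)$ and $\M(V_1,V_2)$ in the induced norm---at the negligible cost of bounding through $1-\|X_1EX_1\|-\|X_2EX_2\|$ rather than the possibly larger $\sep(A_1+X_1EX_1,A_2+X_2EX_2)$, which is all the stated bound requires anyway. One bookkeeping remark: your bound on $\|P_2\|$ has numerator $\|X_1EX_2\|$, while the printed statement has $\|X_2EX_1\|$; the paper's own proof (and its use in the main theorem) also produces $\|X_1EX_2\|$, so you agree with the proof and the discrepancy is a typo in the statement rather than a gap on your side.
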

The proof can be found in Appendix~\ref{app:perturb}.

Consider a compact group $\G$, and let $\irr(\G)$ be the infinite set of all (finite-dimensional) irreducible representations of $\G$. For $\lambda \in \irr(\G)$, $d_\lambda$ is defined to be the dimension of representation $\lambda$. For a matrix-valued function $\phi$ defined on $\G$, the Fourier operator is defined as
\begin{equation}
    F(\phi) \equiv \int dg \bigoplus_{\lambda\in \irr(\G)} \bar\sigma_\lambda(g) \otimes \phi(g) \equiv \int dg \bar\omega_\G(g) \otimes \phi(g).
\end{equation}
where $\bar\omega_\G(g) = \bigoplus_{\lambda\in \irr(\G)} \bar\sigma_\lambda(g)$.
The function $\phi$ could be retrieved from $F(\phi)$ via the inverse Fourier transformation
\begin{equation}
    \phi(g)= \Tr_{V_{\omega_\G}}[F(\phi)D_\G (\bar\omega_\G(g^{-1}) \otimes I)]
\end{equation}
where
\begin{equation}
    D_\G = \bigoplus_{\lambda\in \irr(\G)} d_\lambda I.
\end{equation}

In the case we study, $\phi(g)$ is a superoperator on some Hilbert space, and its diamond norm induces two norms on Fourier operators
\begin{align*}
    \|F(\phi)\|_{\max} =& \max_g \|\Tr_{V_{\omega_\G}}[F(\phi)D_\G (\bar\omega_\G(g^{-1}) \otimes I)]\|_\diamond = \max_g\|\phi(g)\|_\diamond, \\
    \|F(\phi)\|_{\m} =& \E_{g\sim G} \|\Tr_{V_{\omega_\G}}[F(\phi)D_\G (\bar\omega_\G(g^{-1}) \otimes I)]\|\|_\diamond = \E_{g \sim G}\|\phi(g)\|_\diamond. \\
\end{align*}
They are known to satisfy the following inequalities\cite{helsen2020}
\begin{align}
    \|F(\phi) F(\phi')\|_{\max} \le& \|F(\phi) \|_{\max}\|F(\phi')\|_{\max}, \nonumber \\ 
    \|F(\phi) F(\phi')\|_{\m} \le& \|F(\phi) \|_{\m}\|F(\phi')\|_{\m}, \nonumber \\
    \|F(\phi) F(\phi')\|_{\max} \le& \|F(\phi) \|_{\max}\|F(\phi')\|_{\m}. \label{eq:norms}
\end{align}

One fact we will use in the following derivation is the $\m$-norm of identity operator.
\begin{lemma}
\label{lem:id}
$\|\id\|_\m=1$ where $\id$ is the identity operator in the Fourier space.
\end{lemma}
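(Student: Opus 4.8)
The claim is that the identity operator $\id$ in the Fourier space has $\m$-norm equal to $1$. Recall that by definition $\|F(\phi)\|_\m = \E_{g\sim\G}\|\phi(g)\|_\diamond$, where $\phi$ is recovered from its Fourier operator via $\phi(g) = \Tr_{V_{\omega_\G}}[F(\phi)D_\G(\bar\omega_\G(g^{-1})\otimes I)]$. So the plan is to identify which matrix-valued function $\phi$ on $\G$ has $F(\phi) = \id$, and then show that $\|\phi(g)\|_\diamond = 1$ for every $g$, so that the average over the Haar measure is also $1$.

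First I would argue that the function whose Fourier operator is the identity is precisely $\phi(g) = \delta$-like in the group-algebra sense, i.e.\ the constant-in-nothing object that acts as the identity for the convolution product $F(\phi)F(\phi') = F(\phi * \phi')$. Concretely, $F(\phi) = \id$ should correspond to $\phi = \id_{\text{superop}}\cdot \mathbf{1}_{\{e\}}$ in the discrete case, and more carefully in the compact case to the ``identity of the convolution algebra.'' The cleanest route, avoiding distributional subtleties, is to recall that the identity Fourier operator is the one satisfying $F(\phi)F(\phi') = F(\phi')$ for all $\phi'$; tracking this through the convolution theorem shows $\phi'$ is reproduced exactly, which forces $\phi(g)$ to be (a copy of) the identity superoperator placed at $g = e$ and zero elsewhere. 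Then the recovery formula gives $\phi(e) = \id_{\text{superop}}$ and $\phi(g) = 0$ for $g \neq e$ in the finite case, while in the compact case one works directly with the identity acting on the Hilbert space $\bigoplus_\lambda V_{\bar\sigma_\lambda}\otimes(\text{superop space})$ and checks $\Tr_{V_{\omega_\G}}[\id\cdot D_\G(\bar\omega_\G(g^{-1})\otimes I)] = \Tr_{V_{\omega_\G}}[D_\G(\bar\omega_\G(g^{-1})\otimes I)]$, which by the Peter--Weyl orthogonality relations ($\int dg\,\bar\sigma_\lambda(g) = 0$ for nontrivial $\lambda$, together with the $d_\lambda$ weighting in $D_\G$) collapses to $\id_{\text{superop}}$ at $g = e$.

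Once $\phi(e) = \id_{\text{superop}}$ is established (and that the average is supported there, or that $\|\phi(g)\|_\diamond$ is constant equal to the diamond norm of the identity superoperator), the remaining step is just $\|\id_{\text{superop}}\|_\diamond = 1$: the identity channel is trace-preserving and completely positive, and $\|(\id\otimes\id)(|\psi\rangle\langle\psi|)\|_1 = \||\psi\rangle\langle\psi|\|_1 = 1$ for any normalized state, so the supremum defining the diamond norm is exactly $1$. Averaging a constant $1$ over the Haar probability measure gives $\|\id\|_\m = 1$.

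The main obstacle I anticipate is making the identification ``$F(\phi) = \id \iff \phi = $ identity-at-$e$'' rigorous in the genuinely infinite-dimensional (compact, non-finite) setting, where $\phi$ is not an honest function but a distributional object and the Fourier operator lives on an infinite direct sum $\bigoplus_{\lambda\in\irr(\G)}$. The safe way around this is to never pass to $\phi$ explicitly: instead characterize $\|\id\|_\m$ through the submultiplicativity relations in~\eqref{eq:norms} and the fact that $\id$ is the multiplicative unit, giving $\|F(\phi)\|_\m = \|\id\cdot F(\phi)\|_\m \le \|\id\|_\m\|F(\phi)\|_\m$ hence $\|\id\|_\m \ge 1$, together with a direct upper bound $\|\id\|_\m \le 1$ from exhibiting $\id = F(\phi_0)$ for the trace-preserving $\phi_0$ supported near $e$ with unit diamond norm. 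Pinning down that upper bound cleanly — i.e.\ confirming the unit of the convolution algebra really does have $\E_g\|\phi_0(g)\|_\diamond = 1$ rather than something larger — is where the careful bookkeeping with $D_\G$ and Peter--Weyl will be needed.
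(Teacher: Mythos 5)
Your main computation is exactly the paper's proof: invert the Fourier transform of $\id$ to get $\sum_\lambda d_\lambda \Tr[\bar\sigma_\lambda(g^{-1})]\, I = \delta(g-e)\, I$ (the character-completeness/Peter--Weyl identity the paper cites), then use $\|I\|_\diamond = 1$ and the fact that the delta integrates to $1$ against the Haar measure. One caveat on your fallback route: no honest function $\phi_0$ satisfies $F(\phi_0)=\id$ exactly (only approximate identities do, in a limit), so the ``safe'' upper bound as stated would not actually avoid the distributional step --- but the paper itself accepts the same formal delta computation, so your primary argument is fine.
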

\begin{proof}
Using inverse Fourier transformation we have
\begin{align}
    \|\id\|_m=&\E_g\left\|\Tr_{V_{\omega_\G}}[\id \times D_\G (\bar\omega_\G(g^{-1}) \otimes I)]\right\|_\diamond \nonumber\\
    =& \E_g \sum_\sigma d_\sigma \Tr[\sigma(g^{-1})] \|I\|_\diamond \nonumber\\
    =& \E_g \delta(g-I) \|I\|_\diamond \nonumber\\
    =& 1,
\end{align}
where we used the fact~\cite[Above Eq.~(4.15)]{witten1992} that
\begin{equation}
    \sum_\sigma d_\sigma \Tr[\sigma(g^{-1})] = \delta(g-I).
\end{equation}
\end{proof}

\begin{theo}[restatement of Thm.~\ref{thm:main}]
	Consider an RB experiment defined with compact Lie group $\G$ and reference representation $\omega = \bigoplus_{\lambda\in \Lambda} \sigma_\lambda^{\oplus n_\lambda}$. We denote by $\phi$ the implementation map, i.e. the operation acting on the quantum system. For a specific sequence length $m>0$, let $p(i,m)$ be the outcome probability for measurement result $i$. 
	If there exists $\delta>0$ such that
	\begin{equation}\label{eq:norm_assump}
	\int dg \dnorm{\omega(g) - \phi(g)} \leq \delta < \frac{1}{9},
	\end{equation}
	then the RB output probability $p(i,m)$ satisfies
	\begin{equation}
	\left|p(i,m) -\sum_{\lambda \in \Lambda} \Tr(A_\lambda (M_\lambda)^m)\right| \leq \frac{16}{1-9\delta}\left[\delta\left(2+ \frac{4\delta}{1-5\delta}\right)\right]^m.
	\end{equation}
	where $M_\lambda$ and $A_\lambda$ are $n_\lambda\times n_\lambda$ matrices, and $M_\lambda$ only depends on the implementation $\phi$.

\end{theo}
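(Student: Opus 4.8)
The plan is to transcribe the proof of \cite[Thm.~8]{helsen2020} to the compact‑group setting: the only structural change is that the Fourier analysis now runs over the \emph{infinite} index set $\irr(\G)$, so the finite‑dimensional matrix perturbation theorem used there is replaced by Lemma~\ref{lem:perturb}. First I would put $p(i,m)$ into Fourier form. Writing $p(i,m)=\E_{g_1,\dots,g_m}\langle\!\langle\Pi_i|\,\phi(\gend g_{\inv})\,\phi(g_m)\cdots\phi(g_1)\,|\rho\rangle\!\rangle$ with the $g_j$ i.i.d.\ Haar and $g_{\inv}=(g_m\cdots g_1)^{-1}$, I expand the final gate by the inverse Fourier transform; since $(\gend g_{\inv})^{-1}=(g_m\cdots g_1)\gend^{-1}$ and $\bar\omega_\G$ is a homomorphism, and since $\bigl(\prod_j a_j\bigr)\otimes\bigl(\prod_j b_j\bigr)=\prod_j(a_j\otimes b_j)$ makes the $j$‑th factor $\bar\omega_\G(g_j)\otimes\phi(g_j)$, taking the expectation factorizes everything into copies of $F(\phi)=\E_g[\bar\omega_\G(g)\otimes\phi(g)]$ (the case $m=0$, which must return $\langle\!\langle\Pi_i|\phi(\gend)|\rho\rangle\!\rangle$, fixes conventions). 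This gives
\begin{equation}
p(i,m)=\langle\!\langle\Pi_i|\,\Tr_{V_{\omega_\G}}\!\bigl[F(\phi)\,D_\G\,F(\phi)^m\,(\bar\omega_\G(\gend)^{-1}\otimes I)\bigr]\,|\rho\rangle\!\rangle=:\mathcal{L}\bigl(F(\phi)^m\bigr),
\end{equation}
with $\mathcal{L}$ a fixed, $m$‑independent linear functional of $F(\phi)^m$ absorbing the SPAM data, $\gend$, and one copy of $F(\phi)$.

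Next I would set up the perturbation. Write $F(\phi)=A+E$ with $A=F(\omega)$ and $E=F(\phi-\omega)$. Since every $\omega(g)$ is a unitary channel and $\bar\omega_\G$ a unitary representation, $A$ is self‑adjoint in the Hilbert--Schmidt structure and idempotent ($A^2=A$ by Haar invariance), hence equals the orthogonal projector $X_1$ onto $V_1:=\operatorname{Im}F(\omega)$; by orthogonality of characters $\dim V_1=\sum_{\lambda\in\Lambda}n_\lambda<\infty$, and $V_1$ splits over $\lambda\in\Lambda$ into blocks of dimension $n_\lambda$. With $V_2=V_1^{\perp}$ this is exactly the configuration of Lemma~\ref{lem:perturb}: $A_1=I_{V_1}$, $A_2=0$, $\sep(A_1,A_2)=\sep(A_2,A_1)=1$. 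Assumption~\eqref{eq:norm_assump} gives $\|E\|_\m=\int dg\,\dnorm{\phi(g)-\omega(g)}\le\delta$, and since $\|X_1\|_\m=\|F(\omega)\|_\m=1$ and $\|X_2\|_\m\le\|\id\|_\m+\|F(\omega)\|_\m=2$ (here Lemma~\ref{lem:id}, which gives $\|\id\|_\m=1$, enters), one obtains $\|X_1EX_1\|_\m\le\delta$, $\|X_1EX_2\|_\m,\|X_2EX_1\|_\m\le2\delta$, $\|X_2EX_2\|_\m\le4\delta$, so that \eqref{eq:sep}--\eqref{eq:pert_props} hold (the binding constraint, \eqref{eq:pert_props}, becoming $(2\delta/(1-5\delta))^2<\tfrac14$, i.e.\ exactly $\delta<\tfrac19$). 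Lemma~\ref{lem:perturb}, applied in the $\m$‑norm, then produces $P_1,P_2$ with the stated norm bounds and operators $R_i,L_i$ obeying $L_i^{\dagger}R_j=\delta_{ij}X_i$, whence $F(\phi)^m=R_1(A_1')^{m}L_1^{\dagger}+R_2(A_2')^{m}L_2^{\dagger}$.

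Finally I would split $\mathcal{L}(F(\phi)^m)$ into signal and leakage. The Fourier operators $F(\phi),F(\omega)$, and therefore $E$, are block diagonal over $\irr(\G)$, and $P_1$ --- being a fixed point of a map assembled from $E,A_1,A_2$ --- inherits this structure; hence $A_1'=X_1+X_1EX_1+X_1EX_2P_1=\bigoplus_{\lambda\in\Lambda}M_\lambda$ with each $M_\lambda$ an $n_\lambda\times n_\lambda$ matrix depending only on $\phi$ (and $\omega$), not on $\rho$, $\{\Pi_i\}$ or $\gend$. Since $|\Lambda|<\infty$, $\mathcal{L}(R_1(A_1')^{m}L_1^{\dagger})=\sum_{\lambda\in\Lambda}\mathcal{L}_\lambda((M_\lambda)^m)$, and writing each linear functional $\mathcal{L}_\lambda$ on $n_\lambda\times n_\lambda$ matrices as $\Tr(A_\lambda\,\cdot\,)$ gives the claimed leading term. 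For the remainder $\mathcal{L}(R_2(A_2')^{m}L_2^{\dagger})$ I would first simplify it, using $A_2'=X_2A_2'X_2$, $L_2^{\dagger}R_2=X_2$, $L_1^{\dagger}R_2=0$ and $F(\phi)D_\G=D_\G F(\phi)$, to $\langle\!\langle\Pi_i|\Tr_{V_{\omega_\G}}[D_\G R_2(A_2')^{m+1}L_2^{\dagger}(\bar\omega_\G(\gend)^{-1}\otimes I)]|\rho\rangle\!\rangle$, the leftover $F(\phi)$ having become one extra power of $A_2'$; since $R_2A_2'L_2^{\dagger}=F(\phi)-R_1A_1'L_1^{\dagger}$ is again a Fourier operator, this equals $\langle\!\langle\Pi_i|\,\psi_{m+1}(\gend)\,|\rho\rangle\!\rangle$ with $\psi_{m+1}$ the $(m+1)$‑fold convolution power of the function of $R_2A_2'L_2^{\dagger}$, hence (using $\|\Pi_i\|_\infty,\|\rho\|_1\le1$) is bounded by $\|(R_2A_2'L_2^{\dagger})^{m+1}\|_{\max}$. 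Expanding this through the mixed and submultiplicative inequalities~\eqref{eq:norms} so that a single $\max$‑factor survives, and inserting the Lemma~\ref{lem:perturb} bounds on $\|P_1\|_\m,\|P_2\|_\m,\|A_2'\|_\m,\|R_2\|_\m,\|L_2\|_\m$, then delivers the claimed bound $\frac{16}{1-9\delta}\bigl[\delta(2+\tfrac{4\delta}{1-5\delta})\bigr]^m$ by the same chain of estimates as in \cite{helsen2020}, with $\delta<\tfrac19$ again guaranteeing positivity of the prefactor.

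The genuinely new obstacle --- the reason Lemma~\ref{lem:perturb} had to be proved --- is that the Fourier space $V_{\omega_\G}\otimes V_\phi$ is infinite‑dimensional, a compact Lie group having infinitely many irreducibles, so the matrix perturbation theorem of \cite{helsen2020} does not apply verbatim. What rescues the argument is that the signal block $V_1=\operatorname{Im}F(\omega)$ is nevertheless finite‑dimensional, $\dim V_1=\sum_{\lambda\in\Lambda}n_\lambda$, because $\omega$ acts on the finite‑dimensional operator space of the system; the entire infinite tail $\irr(\G)\setminus\Lambda$ sits in the leakage block $V_2$ and is controlled uniformly by the single scalar $\delta$. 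The points that need care are then (i) confirming that $F(\omega)$ really is the orthogonal projector $X_1$ that Lemma~\ref{lem:perturb} assumes; (ii) that the $\m$‑norm --- which is not sub‑unital on $X_2$ (whence the factors of $2$, via Lemma~\ref{lem:id}) and under which $D_\G$ is unbounded on the tail, so that $D_\G$ may appear only inside convergent inverse‑Fourier‑type contractions --- is the right operator norm in which to run the perturbation; and (iii) that block‑diagonality over $\irr(\G)$ survives the construction of $P_1,P_2$, so that $A_1'$ genuinely splits as $\bigoplus_\lambda M_\lambda$. The attendant constant‑chasing in the last step is laborious but parallels the finite‑group case.
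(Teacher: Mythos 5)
Your proposal is correct and follows essentially the same route as the paper: Fourier-transform $p(i,m)$ into powers of $F(\phi)$, apply Lemma~\ref{lem:perturb} with $X_1=F(\omega)$, $E=F(\phi-\omega)$ in the $\m$-norm (using Lemma~\ref{lem:id} for $\|X_2\|_\m\le 2$ and $\delta<\tfrac19$ for the sep conditions), extract $M_\lambda=X_1^{(\lambda)}F(\phi)^{(\lambda)}(X_1^{(\lambda)}+X_2^{(\lambda)}P_1^{(\lambda)})$ by block-diagonality over $\irr(\G)$, and bound the leakage term via the $\max$/$\m$-norm inequalities and the $P_1,P_2$ estimates. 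The only part you leave implicit — the final constant chase yielding $\tfrac{16}{1-9\delta}$ — is exactly the computation carried out in the paper, so there is no substantive gap.
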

\begin{proof}
We will work in the linear space of operators, so $|A\kk$ refers to the vectorized version of operator $A$, and $\bb A | B \kk =  \Tr[A^\dagger B]$. We take into consideration the state preparation and measurement (SPAM) error, so the actual initial state is denoted by $\esp(\rho_0)$ and the actual measurement is denoted by $\cem(\Pi_i)$, where $\rho_0$ is the initial state and $\{\Pi_i\}_i$ is the final POVM. First note that $p(i,m)$ could be expressed as a convolution,
\begin{equation}
    p(i,m) = \bb \cem(\Pi_i)|\phi^{*(m+1)}(\gend)|\esp(\rho_0)\kk
\end{equation}
where $\gend$ is the ending gate. Since the convolution of operators can be expressed as the multiplication of their Fourier transformation, we have
\begin{equation}
    p(i,m) = \bb \cem(\Pi_i)|\Tr_{V_{\omega_\G}}[F(\phi)^{m+1}D_\G (\bar\omega_\G(\gend^{-1}) \otimes I)]|\esp(\rho_0)\kk. \label{eq:pim}
\end{equation}

The next step is to use Lemma~\ref{lem:perturb} on $F(\phi) = F(\omega)+F(\phi-\omega)$. $F(\omega)$ is known to be a projector of finite dimension, and $F(\phi-\omega)$ is viewed as a perturbation. We set $X_1 = F(\omega)$ and $X_2 = \id - F(\omega)$. The condition Eq.~\eqref{eq:norm_assump} implies
\begin{equation}
    \|F(\phi-\omega)\|_m \le \delta,
\end{equation}
so Eq.~\eqref{eq:sep} and Eq.~\eqref{eq:pert_props} now becomes
\begin{align}
    1-\|X_1 F(\phi-\omega)\|_\m - \|X_2 F(\phi-\omega)X_2\|_\m \ge& 1 - 5\|F(\phi-\omega)\|_\m >0, \\
    \frac{\|X_1 F(\phi-\omega) X_2\|_\m \|X_2 F(\phi-\omega) X_1\|_\m}{(1-\|X_1 F(\phi-\omega)\|_\m - \|X_2 F(\phi-\omega)X_2\|_\m)^2} \le& \frac{(2\delta)^2}{(1-5\delta)^2} < \frac{1}{4},
\end{align}
where we used the fact that $\|X_1\|_\m \le 1$, $\|X_2\|_\m\leq \|\id\|_\m+\|X_1\|_\m=2$. Then by Lemma~\ref{lem:perturb}
\begin{align}
    F(\phi) =& R_1(F(\omega) + X_1 F(\phi-\omega) X_1  + X_1 F(\phi-\omega) X_2 P_1)L_1^\dagger \nonumber\\
    &+ R_2(X_2 F(\phi-\omega) X_2  -P_1 X_1 F(\phi-\omega) X_2)L_2^\dagger \nonumber \\
    =& R_1(F(\omega) + X_1 F(\phi-\omega)(X_1+ X_2 P_1))L_1^\dagger + R_2 (X_2  -P_1 X_1) F(\phi-\omega) X_2 L_2^\dagger.
\end{align}
Here, we have used the fact that $X_1$ and $X_2$ are Hermitian, and that $X_1 F(\omega) X_1 = F(\omega), X_2 F(\omega) X_2 = 0$. Since $L^\dagger = R^{-1}$, we have $L_2^\dagger R_1 = 0, L_1^\dagger R_2 = 0$, so
\begin{equation}
    F(\phi)^{m+1} = R_1[F(\omega) + X_1 F(\phi-\omega)(X_1+ X_2 P_1)]^{m+1}L_1^\dagger + R_2 [(X_2  -P_1 X_1) F(\phi-\omega) X_2]^{m+1} L_2^\dagger
\end{equation}
and correspondingly Eq.~\eqref{eq:pim} could be split into 2 terms, $p(i,m)=\alpha+\beta$ where
\begin{align}
    \alpha =& \bb \cem(\Pi_i)|\Tr_{V_{\omega_\G}}[R_1[F(\omega) + X_1 F(\phi-\omega)(X_1+ X_2 P_1)]^{m+1}L_1^\dagger D_\G (\bar\omega_\G(\gend^{-1}) \otimes I)]|\esp(\rho_0)\kk, \\
    \beta =& \bb \cem(\Pi_i)|\Tr_{V_{\omega_\G}}[R_2 [(X_2  -P_1 X_1) F(\phi-\omega) X_2]^{m+1} L_2^\dagger D_\G (\bar\omega_\G(\gend^{-1}) \otimes I)]|\esp(\rho_0)\kk.
\end{align}

To calculate $\alpha$, note that $F(\omega)$ and $F(\phi-\omega)$ are both block diagonal, so we could choose the operators $X_1$, $X_2$, $P$, $L$ and $R$ to be block diagonal as well. We use the superscript ${}^{(\lambda)}$ to represent the block corresponding to block $\lambda$. Note that for any $\lambda$ not contained in $\omega$, $F(\omega)^{(\lambda)} = X_1^{(\lambda)}=0$, so we only need to consider $\lambda\in\Lambda$. Then
\begin{align}
    \alpha =& \bb \cem(\Pi_i)|\Tr_{V_{\omega_\G}}[R_1[F(\omega) + X_1 F(\phi-\omega)(X_1+ X_2 P_1)]^{m+1}L_1^\dagger D_\G (\bar\omega_\G(\gend^{-1}) \otimes I)]|\esp(\rho_0)\kk, \nonumber\\
    =& \bb \cem(\Pi_i)|\Tr_{V_{\omega_\G}}[D_\G(\bar\omega_\G(\gend^{-1}) \otimes I)F(\phi)R_1[F(\omega) + X_1 F(\phi-\omega)(X_1+ X_2 P_1)]^{m}L_1^\dagger  ]|\esp(\rho_0)\kk, \nonumber \\
    =& \bb \cem(\Pi_i)|\Tr_{V_{\omega_\G}}[D_\G(\bar\omega_\G(\gend^{-1}) \otimes I)F(\phi)R_1[X_1 F(\phi)(X_1+ X_2 P_1)]^{m}L_1^\dagger  ]|\esp(\rho_0)\kk, \nonumber\\
    =& \sum_{\lambda\in\Lambda} d_\lambda\bb \cem(\Pi_i)|\Tr_{V_{\lambda}}\left[(\bar\omega_\lambda(\gend^{-1}) \otimes I)F(\phi)^{(\lambda)}R_1^{(\lambda)}\left[X_1^{(\lambda)} F(\phi)^{(\lambda)}(X_1^{(\lambda)}+ X_2^{(\lambda)} P_1^{(\lambda)})\right]^{m}L_1^{(\lambda)\dagger}  \right]|\esp(\rho_0)\kk \nonumber\\
    =& \sum_{\lambda\in\Lambda} d_\lambda \Tr\left[L_1^{(\lambda)\dagger}(I_{V_\lambda} \otimes |\esp(\rho_0)\kk\bb \cem(\Pi_i)|^{(\lambda)})(\bar\omega_\lambda(\gend^{-1}) \otimes I)F(\phi)^{(\lambda)}R_1^{(\lambda)}\left[X_1^{(\lambda)} F(\phi)^{(\lambda)}(X_1^{(\lambda)}+ X_2^{(\lambda)} P_1^{(\lambda)})\right]^{m}  \right], 
\end{align}
Here $|\esp(\rho_0)\kk\bb \cem(\Pi_i)|^{(\lambda)}$ is obtained by projecting $|\esp(\rho_0)\kk\bb \cem(\Pi_i)|$ into block $\lambda$. The operator $|\esp(\rho_0)\kk\bb \cem(\Pi_i)|$ does not have to be block diagonal, but the off-diagonal part does not contribute to the result when contracting with a block diagonal operator.

Note that $P_1 X_1 = P_1$, we have
\begin{equation}
    X_1^{(\lambda)}\left[X_1^{(\lambda)} F(\phi)^{(\lambda)}(X_1^{(\lambda)}+ X_2^{(\lambda)} P_1^{(\lambda)})\right]X_1^{(\lambda)} = \left[X_1^{(\lambda)} F(\phi)^{(\lambda)}(X_1^{(\lambda)}+ X_2^{(\lambda)} P_1^{(\lambda)})\right],
\end{equation}
which means that $X_1^{(\lambda)} F(\phi)^{(\lambda)}(X_1^{(\lambda)}+ X_2^{(\lambda)} P_1^{(\lambda)})$ lies within the $n_\lambda$-dimensional subspace defined by $X_1^{(\lambda)}$. Then we can define
\begin{align}
    M_\lambda\equiv& X_1^{(\lambda)} F(\phi)^{(\lambda)}(X_1^{(\lambda)}+ X_2^{(\lambda)} P_1^{(\lambda)}), \\
    A_\lambda\equiv& d_\lambda X_1^{(\lambda)}\left[L_1^{(\lambda)\dagger}(|\esp(\rho_0)\kk\bb \cem(\Pi_i)|\otimes I_{V_\lambda})(\bar\omega_\lambda(\gend^{-1}) \otimes I)F(\phi)^{(\lambda)}R_1^{(\lambda)}\right]X_1^{(\lambda)}
\end{align}
and have
\begin{equation}
    \alpha = \sum_{\lambda\in\Lambda}\Tr[A_\lambda M_\lambda^m].
\end{equation}
With a suitable choice of basis $A_\lambda$ and $M_\lambda$ could be written as $n_\lambda$-dimensional matrices.

To give a bound for $\beta$ we use Eq.~\eqref{eq:norms} and have
\begin{align}
    \beta \le& \left\|\Tr_{V_{\omega_\G}}[R_2 [(X_2  -P_1 X_1) F(\phi-\omega) X_2]^{m+1} L_2^\dagger D_\G (\bar\omega_\G(\gend^{-1}) \otimes I)]\right\|_\diamond \nonumber \\
    \le& \left\|R_2 [(X_2  -P_1 X_1) F(\phi-\omega) X_2]^{m+1} L_2^\dagger \right\|_{\max} \nonumber \\
    =& \left\|(X_2 + X_1 P_2 + X_2 P_1 P_2) [(X_2  -P_1 X_1) F(\phi-\omega) X_2]^{m+1} (X_2 - P_1 X_1) \right\|_{\max} \nonumber \\
    \le& \|(X_2 + X_1 P_2 + X_2 P_1 P_2)(X_2  -P_1 X_1)\|_\m \| F(\phi-\omega) \|_{\max} \nonumber \\
    &\times \|[X_2(X_2  -P_1 X_1) F(\phi-\omega)]^m\|_\m \|X_2(X_2 - P_1 X_1)\|_\m \nonumber \\
    \le& (2+\|P_2\|_\m + \|P_1P_2\|_\m)(2+\|P_1\|_\m)^2\|F(\phi-\omega)\|_{\max}[(2+\|P_1\|_\m)\|F(\phi-\omega)\|_\m]^m
\end{align}
where we used the fact that $\|X_1\|_\m \le 1$, $\|X_2\|_\m \le \|\id\|_\m + \|X_1\|_\m = 2$ ($\|\id\|_\m \le 1$ using Lemma~\ref{lem:id}).
By Lemma~\ref{lem:perturb} we have
\begin{align}
    \|P_1\|_\m \le& \frac{2\|X_1 F(\phi-\omega) X_2\|_\m}{1-\|X_1 F(\phi-\omega)X_1\|_\m - \|X_2 F(\phi-\omega)X_2\|_\m} \nonumber \\
    \le& \frac{4\delta}{1-5\delta}, \\
    \|P_2\|_\m \le& \frac{\|X_1 F(\phi-\omega) X_2\|_\m}{1-\|X_1 F(\phi-\omega)X_1\|_\m - \|X_2 F(\phi-\omega)X_2\|_\m - 2\|P_1\|_\m \|X_1 F(\phi-\omega) X_2\|_\m} \nonumber \\
    \le& \frac{2\delta}{1-5\delta-4\delta\frac{4\delta}{2-5\delta}} \nonumber\\
    =&\frac{2\delta(1-5\delta)}{(1-\delta)(1-9\delta)}.
\end{align}
Note that $\|F(\phi-\omega)\|_{\max} \le \|F(\phi)\|_{\max} + \|F(\omega)\|_{\max} \le 2$, we have
\begin{equation}
    \beta \le \frac{16(1-3\delta)^2(1-8\delta)}{(1-5\delta)^2(1-9\delta)}\left(\delta\left[2+ \frac{4\delta}{1-5\delta}\right]\right)^m\le \frac{16}{1-9\delta}\left(\delta\left[2+ \frac{4\delta}{1-5\delta}\right]\right)^m.
\end{equation}
\end{proof}

\section{Numerical Tests Of The Protocol}
\label{sec:numerics}
In this section we will demonstrate some numerical results about our fully randomized benchmarking (FRB) protocol and its interleaved version. For an experimental system with Hilbert space dimension $d$ ($d=4$ in our case), FRB is RB over the group $SU(d)$, i.e. the group of all unitary operators on the Hilbert space. The reference implementation is the conjugation of the unitary operator, i.e. for $U\in SU(d)$, the superoperator $\omega(U)$ maps any input state $\rho$ to $U\rho U^\dagger$. The interleaved version of FRB could estimate the average fidelity of any gate $V$ by running an ``interleaved sequence'' on the device in additional to the original version of FRB. The interleaved sequence applies the gate $V$ after each randomly sampled gate. After doing the numerical fitting, the gate fidelity is given by Eq.~\eqref{eq:irb}. As was noted in~\cite{helsen2020}, interleaved RB could fit into the framework in Section~\ref{sec:framework} by taking the following implementation map:
\begin{equation}
    \phi_C(g) = \phi(V) \circ \phi(V^{-1}g)
\end{equation}
where $\phi$ is the original implementation map on the device and $\circ$ is the composition of superoperators.

In Section~\ref{sec:compare} we will run some tasks that are also doable by the traditional Clifford RB, and we will compare the numerical stability of the two protocols quantified by the standard deviation of the result over multiple runs. In Section~\ref{sec:iswap} we will run interleaved FRB on a continuous family of gates and compare the obtained gate fidelity with the corresponding theoretical results.

All RB simulations are implemented using Qiskit~\cite{qiskit}. For each RB experiment we pick 11 different sequence lengths, and 30 random circuits will be generated for each given length. Each circuit will be run 5000 times in order to determine the output distribution. The average error rate per gate will be calculated based on the output distribution of the $11\times 30$ circuits described above.

For all the numerical results, the gates in the generated gate sequence will be decomposed into some elementary gates before the simulation. The noise will then be applied to each elementary gate during the simulation.

\subsection{Comparison With Clifford Randomized Benchmarking}
\label{sec:compare}
In this section the basis gate in our simulator will be $\cnot$ and single qubit gates. The single qubits gates are further classified into $U_1, U_2$ and $U_3$ defined as
\begin{equation}
    U_1(\lambda) =
    \begin{pmatrix}
    1 & 0 \\
    0 & e^{i\lambda}
    \end{pmatrix},
    U_2(\phi,\lambda) =
    \frac{1}{\sqrt 2}
    \begin{pmatrix}
    1 & -e^{i\lambda} \\
    e^{i\phi} & e^{i(\phi +\lambda)}
    \end{pmatrix},
    U_3(\theta,\phi,\lambda) =
    \begin{pmatrix}
    \cos(\theta/2) & -e^{i\lambda}\sin(\theta/2) \\
    e^{i\phi}\sin(\theta/2) & e^{i(\phi+\lambda)}\cos(\theta/2)
    \end{pmatrix}.
\end{equation}
$U_1$ is a $Z$ rotation, while $U_2$ (resp. $U_3$) can be implemented by $Z$ rotations together with 1 (resp. 2) rotations in $X$ or $Y$ directions. $Z$ rotations are often implemented by a shift of basis and has no error. So in our error model we assume that $U_1$ is noiseless, while the noise for $U_3$ is twice as strong as that of $U_2$.

The noise models we consider are the depolarization channel $\cN_\dep^p$ and the amplitude damping channel $\cN_\ad^\gamma$. For a general $n$-qubit system, the depolarization channel is defined as
\begin{equation}
    \cN_\dep^p(\rho) = (1-p)\rho + p \frac{I}{2^n} = (1-p)\rho + \frac{p}{4^n} \sum_{\mathbf{i}\in\{0,1,2,3\}^n} \sigma_\mathbf{i} \rho \sigma_\mathbf{i}, \label{eq:dep}
\end{equation}
where $\sigma_\mathbf{i}$ is the Pauli operators on $n$ qubits. For a single qubit, the amplitude damping channel is defined as
\begin{equation}
    \cN_\ad^\gamma(\rho) = A_0\rho A_0^\dagger + A_1\rho A_1^\dagger, \quad 
    A_0 =
    \begin{pmatrix}
    1 & 0 \\
    0 & \sqrt{1-\gamma} \\
    \end{pmatrix}, \quad
    A_1 =
    \begin{pmatrix}
    0 & \sqrt{\gamma} \\
    0 & 0
    \end{pmatrix}. \label{eq:ad}
\end{equation}

We first use RB to study the error rate per gate on this model, with the noise model being depolarization and amplitude damping respectively. For depolarization channel, we assume $p=0.01$ for $\cnot$ gates, $p=0.002$ for $U_2$ and $p=0.004$ for $U_3$. For amplitude damping channel, we assume $\gamma=0.01$ on both qubits for $\cnot$ gates, $\gamma=0.003$ for $U_2$ and $\gamma=0.006$ for $U_3$.

We first run Clifford and FRB 30 times each and compare their standard deviation, which can represent the stability of the fitted result. The numerical results are summarized in Table~\ref{tab:rb}. The discrepancy in the mean is to be expected, as Clifford RB gives the average error rate per Clifford gate while FRB gives the error rate per Haar random gate. A random 2-qubit Clifford gate could be decomposed into 1.5 $\cnot$ gate on average\cite[Supplemental Material]{corcoles2013}, while a Haar random gate needs 3 $\cnot$ gates\cite{zhang2004}, so it is reasonable that FRB gives a larger error rate than Clifford RB. The standard deviation for FRB is slightly larger than that of Clifford RB, but in all cases it is small compared with the mean.
\begin{table}[ht]
    \centering
        \begin{tabular}{c|c|c|c|c}
            \hline
             & \multicolumn{2}{c|}{Depolarization} & \multicolumn{2}{c}{Amplitude Damping} \\
            \cline{2-5}
             & Clifford RB & FRB & Clifford RB & FRB \\
            \hline 
            Mean & 0.0159 & 0.0407 & 0.0167 & 0.0421 \\
            \hline
            Standard Deviation & $1.36\times 10^{-4}$ & $1.61 \times 10^{-4}$ & $2.90\times 10^{-4}$ & $5.29\times 10^{-4}$ \\
            \hline
        \end{tabular}
    \caption{Comparison of Clifford RB and FRB on depolarizing channel and amplitude damping channel.}
    \label{tab:rb}
\end{table}

We also run similar tests on interleaved benchmarking, with $\cnot$ gate as the interleaved gate. Again Clifford RB and FRB are run 30 times each, and the mean and stand deviation of the obtained gate fidelity are listed in Table~\ref{tab:irb}. By Eqs.~\eqref{eq:fidelity} and~\eqref{eq:dep}, 
\begin{equation}
    F_g = \frac{(1-p)4^n + p + 2^n}{2^n(2^n+1)} = 0.9925
\end{equation}
for our depolarization channel, and by Eqs.~\eqref{eq:fidelity} and~\eqref{eq:ad},
\begin{equation}
    F_g = \frac{(1+\sqrt{1-\gamma})^4 + 2^n}{(2^n+1)2^n} = 0.9920
\end{equation}
for amplitude damping channel (note that the actual error channel is amplitude damping acting on both of the qubits). We can see that the mean of the numerical results matches the theoretical values of the gate fidelity. Again the standard deviation of FRB is slightly larger than that of Clifford RB, but it is small enough for the experiment.
\begin{table}[ht]
    \centering
        \begin{tabular}{c|c|c|c|c}
            \hline
             & \multicolumn{2}{c|}{Depolarization} & \multicolumn{2}{c}{Amplitude Damping} \\
            \cline{2-5}
             & Clifford RB & FRB & Clifford RB & FRB \\
            \hline 
            Mean & 0.9925 & 0.9925 & 0.9922 & 0.9921 \\
            \hline
            Standard Deviation & $1.58\times 10^{-4}$ & $1.88 \times 10^{-4}$ & $4.37\times 10^{-4}$ & $4.84\times 10^{-4}$\\
            \hline
        \end{tabular}
    \caption{Comparison of interleaved Clifford RB and FRB on depolarizing channel and amplitude damping channel.}
    \label{tab:irb}
\end{table}

\subsection{Benchmarking Of $\iswap$ Gate Family}
\label{sec:iswap}

In this section we will use interleaved FRB to estimate the gate fidelity of a continuous family of gates known as the $\iswap$ gate family~\cite{majer2007,wendin2007,steffen2006,bialczak2010,neeley2010,dewes2012,sillanpaa2007}. Since most of the gates in this family are not Clifford, such task is not doable with the traditional Clifford RB.

The gates in the $\iswap$ gate family are also known as $\xy$-gates. We follow the convention in~\cite{abrams2020} and define it as
\begin{equation}
    \xy(\theta) = 
    \begin{pmatrix}
        1 & 0 & 0 & 0 \\
        0 & \cos(\theta/2) & i\sin(\theta/2) & 0 \\
        0 & i\sin(\theta/2) & \cos(\theta/2) & 0 \\
        0 & 0 & 0 & 1
    \end{pmatrix},
\end{equation}
and in particular, $\xy(\pi)$ is known as the $\iswap$ gate.
This gate family could be implemented as the time evolution of $XX+YY$ type interaction on the two qubits, i.e.
\begin{equation}
    \xy(\theta) = e^{-i\theta H_{XY}},\quad H_{XY}\equiv -\frac{1}{4}(XX+YY).
\end{equation}
Note that $\xy(\theta)$ are distinct gates for $0\le \theta < 4\pi$, but in most cases we are only interested in the parameter range $0 \le \theta \le \pi$, as a gate with $\pi<\theta<4\pi$ could be decomposed into single-qubit $Z$ rotations and a gate within this range.

For our simulation, each gate in the random sequence will be decomposed into elementary gates, which include single-qubit gates and $\iswap$. The error on elementary gates include thermal relaxation error and coherent error.
\begin{itemize}
    \item Thermal relaxation error is defined using parameters $T_1$, $T_2$ (relaxation time) and $T_g$ (gate time). When $T_2 < T_1$, the error channel for a single qubit takes the simple form
    \begin{equation}
        \cN(\rho) = (1-p_r - p_z) \rho + p_z Z \rho Z + p_r \Tr[\rho] |0\>\<0| \label{eq:thermal}
    \end{equation}
    where
    \[
        p_r = 1 - e^{-\frac{T_g}{ T_1}}, \quad p_z = (1 - p_r)(1-e^{-\frac{T_g}{T_2} + \frac{T_g}{T_1}}).
    \]
    For our simulation, we assume that the thermal relaxation error happens on the two qubits independently. We take $T_1 = 100\text{ms}$ and $T_2 = 20\text{ms}$. For single-qubit gates, $T_g=20\text{ns}$ and for $\iswap$, $T_g=60\text{ns}$. Since the $\xy$ gates are implemented as time evolution of $H_{XY}$, the gate time is assumed to be proportional to $\theta$. More specifically, for $\xy(\theta)$ we assume $T_g = \frac{\theta}{\pi} \times 60\text{ns}$.
    \item Coherent error is defined as the application of a non-identity unitary to the system, due to the imprecision of the control parameters. When generating $XY$ gates using Hamiltonian $H_{XY}$, then evolution time might be slightly larger than expected, and the Hamiltonian might have a small $ZZ$ coupling. We assume that the actual unitary applied is
    \begin{equation}
        \widetilde{\xy}(\theta) = \exp(-i(\theta+\delta\theta) (H_{XY} + \delta z H_{ZZ})) = \exp(-i\delta\theta H_{XY} -i(\theta+\delta\theta) \delta z H_{ZZ})\xy(\theta) \label{eq:coherent}
    \end{equation}
    where
    \[
        H_{ZZ} = -\frac{1}{4}ZZ.
    \]
    For our simulation we take $\delta\theta = 0.01$, $\delta z = 0.01$.
\end{itemize}

The gate fidelity could be calculated theoretically using Eqs.~\eqref{eq:fidelity},~\eqref{eq:thermal}, and~\eqref{eq:coherent}, and we will compare it with numerical results in Fig.~\ref{fig:compare}. We can see that the output of our protocol is consistent with the theoretical value.

\begin{figure}[ht]
     \centering
     \begin{subfigure}[b]{0.6\textwidth}
         \centering
         \includegraphics[width=\textwidth]{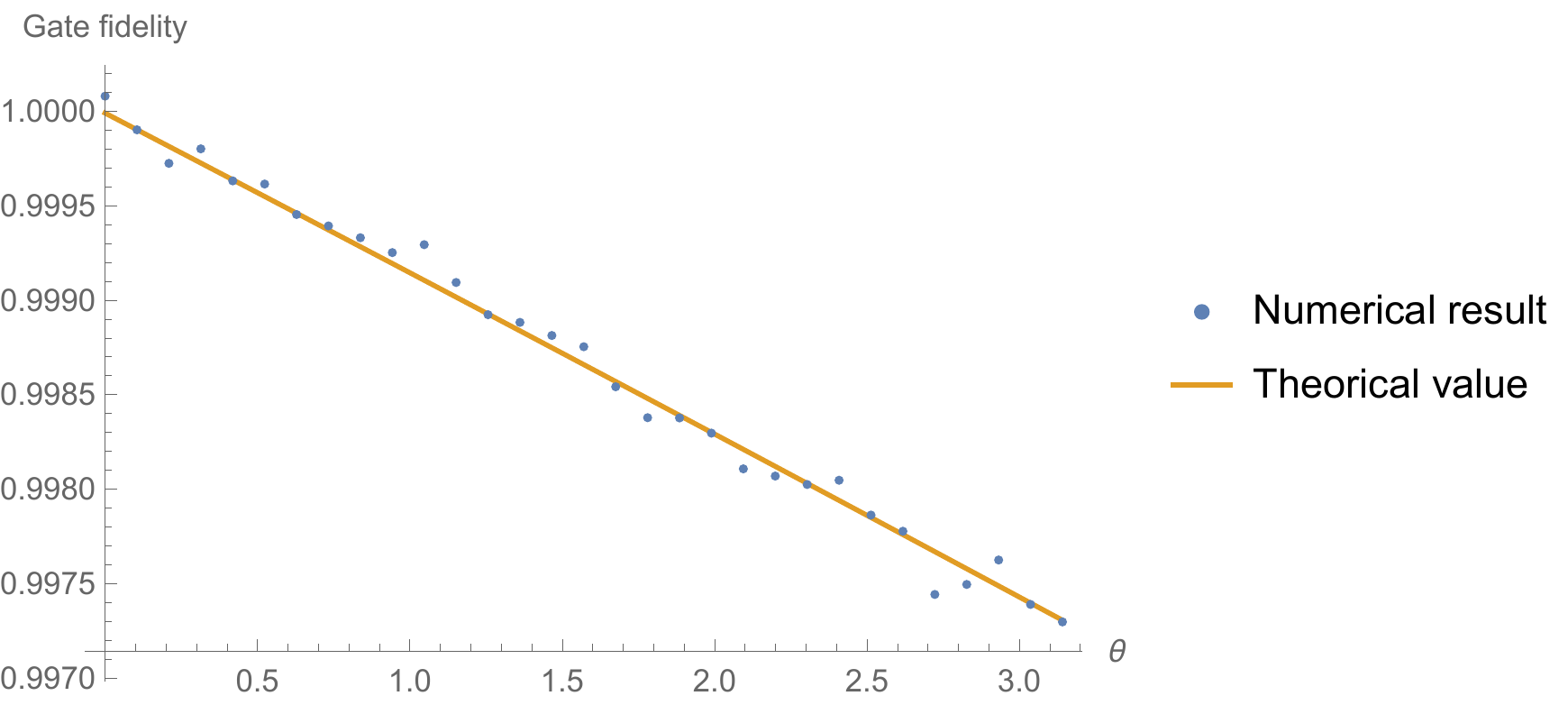}
         \caption{}
         \label{fig:compare}
     \end{subfigure}
     \hfill
     \begin{subfigure}[b]{0.6\textwidth}
         \centering
         \includegraphics[width=\textwidth]{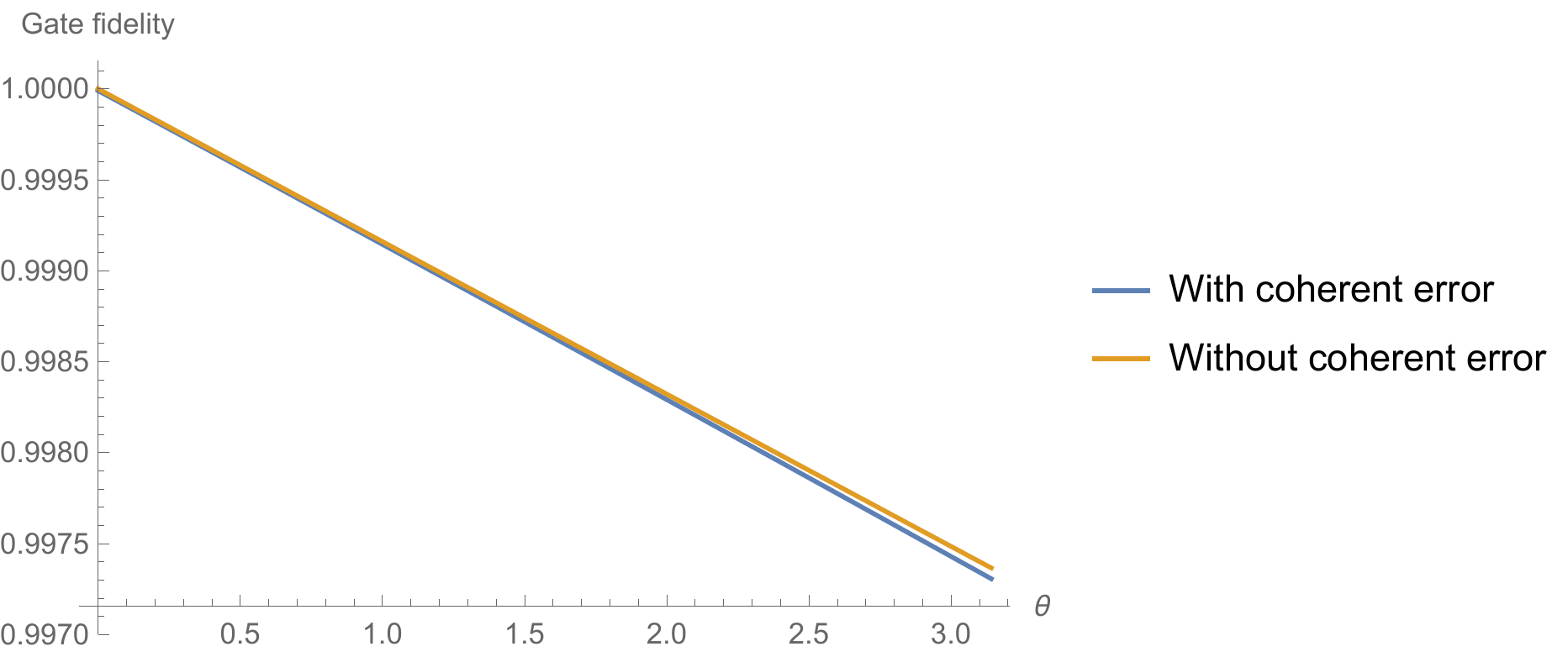}
         \caption{}
         \label{fig:coherent}
     \end{subfigure}
    \caption{(a) Comparison of the numerical results against the theoretical calculation of gate fidelity. (b) Theoretical value of gate fidelity with or without coherent error.}
    \label{fig:figure}
\end{figure}

As a side note, we find that the coherent error make little contribution towards the overall error rate, as can be seen in Fig.~\ref{fig:coherent}. This could also be confirmed by Eq.~\eqref{eq:fidelity} that the decrease of fidelity due to coherent errors is second order small. This shows that gate fidelity might not be a good indicator for coherent errors, which might be of independent interest.

\section{Discussions}
\label{sec:discussion}
In the work we have generalized the RB framework in~\cite{helsen2020} to a continuous group gate set, which encompasses most of the known RB protocols and ensures an exponential decay form of result as long as the average error rate of gate implementation is small enough. As an application one could use FRB as a unified way to estimate gate fidelity for any gate. It would be interesting to see other applications of this framework.

One should note that FRB is not exactly scalable, because the resources needed for generating a Haar random unitary on $n$ qudits is exponential in $n$. But still FRB is very useful for benchmarking gates on a small number of qubits.

Another possible direction is to further weaken the assumptions on noise. The error could be gate dependent in our model, but it is assumed to stay the same each time the gate is implemented. In practical quantum devices, however, the parameters might drift over time, so this assumption might not hold rigorously. An interesting future direction would be to take this into account and see what the RB output would be in this setting.

\section*{Acknowledgement}
The author would thank Jianxin Chen, Dawei Ding, Cupjin Huang, Liang Xiao and Qi Zhang for helpful discussions and feedback. LK is supported by NSF grant CCF-1452616.

\appendix
\section{Proof Of Perturbation Theory}
\label{app:perturb}

We need the following lemma~\cite[Thm.~2.11, Chapter V]{stewart1990}.
\begin{lemma}\label{lem:eqn}
Consider a Banach space $\B$ with norm $\|\cdot\|$. This norm naturally induces a norm for linear operators on $\B$. Let $T$ be a bounded linear operator with bounded inverse, and set
\[
    \delta = \|T^{-1}\|^{-1}.
\]
Let $\phi:\B \to \B$ be a function that satisfies
\begin{equation}
    \|\phi(x)\| \le \eta \|x\|^2
\end{equation}
and 
\begin{equation}
    \|\phi(x) - \phi(y)\| \le 2\eta \max\{\|x\|,\|y\|\}\|x-y\|,\quad \forall x, y \in \B
\end{equation}
for some constant $\eta > 0$. For any $g\in B$, let $\gamma = \|g\|$. If $4\gamma\eta/\delta^2 < 1$ then there exists a unique solution $x\in \B$ for the equation
\begin{equation}
    Tx = g+\phi(x),
\end{equation}
and $x$ satisfies
\begin{equation}
    \|x\| \le \frac{2\gamma}{\delta + \sqrt{\delta^2-4\gamma\eta}} < 2\frac{\gamma}{\delta}.
\end{equation}
\end{lemma}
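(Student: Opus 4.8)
The plan is to recast the equation $Tx = g + \phi(x)$ as a fixed-point problem and solve it with the Banach contraction principle on a carefully chosen closed ball. Since $T$ is invertible with $\|T^{-1}\| = \delta^{-1}$, the equation is equivalent to $x = \Psi(x)$, where $\Psi(x) := T^{-1}(g + \phi(x))$. Because $\B$ is complete, any closed ball $S_r := \{x \in \B : \|x\| \le r\}$ is itself a complete metric space, so it suffices to exhibit a radius $r$ for which $\Psi$ maps $S_r$ into itself and is a contraction there.

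For the self-map (invariance) step, I would estimate, for $\|x\| \le r$,
\[
\|\Psi(x)\| \le \|T^{-1}\|\big(\|g\| + \|\phi(x)\|\big) \le \frac{1}{\delta}\big(\gamma + \eta r^2\big),
\]
using the quadratic bound on $\phi$. Thus $\Psi(S_r) \subseteq S_r$ as soon as $\gamma + \eta r^2 \le \delta r$, i.e. $\eta r^2 - \delta r + \gamma \le 0$. The hypothesis $4\gamma\eta/\delta^2 < 1$ is exactly the statement that this quadratic has positive discriminant, so it is nonpositive precisely on the interval $[r_-, r_+]$ with $r_\pm = (\delta \pm \sqrt{\delta^2 - 4\gamma\eta})/(2\eta)$. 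I would select the smaller root and rationalize, giving $r_- = 2\gamma/(\delta + \sqrt{\delta^2 - 4\gamma\eta})$, which is already the norm bound asserted in the lemma and is visibly $< 2\gamma/\delta$ since the surd is strictly positive.

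For the contraction step, on $S_{r_-}$ the local Lipschitz hypothesis on $\phi$ yields
\[
\|\Psi(x) - \Psi(y)\| \le \frac{1}{\delta}\,\|\phi(x) - \phi(y)\| \le \frac{2\eta\,\max\{\|x\|,\|y\|\}}{\delta}\,\|x - y\| \le \frac{2\eta r_-}{\delta}\,\|x - y\|.
\]
It remains to check that the Lipschitz constant $L := 2\eta r_- / \delta$ is below $1$; substituting $r_-$ gives $L = 4\gamma\eta / \big(\delta(\delta + \sqrt{\delta^2 - 4\gamma\eta})\big)$, and $4\gamma\eta < \delta^2 < \delta(\delta + \sqrt{\delta^2 - 4\gamma\eta})$ forces $L < 1$. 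The contraction principle then delivers a unique fixed point $x \in S_{r_-}$, establishing existence, uniqueness within the ball, and the claimed bound $\|x\| \le r_-$ simultaneously.

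The point needing care is the uniqueness claim, which I expect must be read as uniqueness among sufficiently small solutions rather than globally. An a priori estimate $\|x\| \le \gamma/\delta + (\eta/\delta)\|x\|^2$, valid for \emph{every} solution, forces $\|x\| \le r_-$ or $\|x\| \ge r_+$, and the scalar instance $\B = \R$, $\phi(x) = \eta x^2$ shows that the larger root $r_+$ genuinely yields a second solution. Consistently, $\Psi$ fails to be a contraction on $S_{r_+}$ (there $2\eta r_+/\delta > 1$), which is exactly why the argument must localize to $S_{r_-}$; the dichotomy above confirms that $S_{r_-}$ captures all solutions obeying the asserted norm bound, so the lemma holds as stated once ``unique'' is understood in that qualified sense.
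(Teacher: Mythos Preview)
The paper does not supply its own proof of this lemma; it merely cites Stewart and Sun, \emph{Matrix Perturbation Theory}, Thm.~2.11 of Chapter~V, and states the result. Your contraction-mapping argument on the ball of radius $r_-$ is precisely the standard proof given in that reference, so there is nothing to compare: your approach is correct and is the expected one.

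Your closing paragraph on uniqueness is a worthwhile critical remark. The lemma as quoted asserts a \emph{unique} solution without qualification, but as you observe with the scalar example $\phi(x)=\eta x^2$, there can be a second solution at $r_+$; the contraction argument only secures uniqueness within $S_{r_-}$, and the a~priori dichotomy $\|x\|\le r_-$ or $\|x\|\ge r_+$ is the correct way to delimit what ``unique'' means here. This caveat is implicit in Stewart and Sun (and in the paper's downstream use of the lemma, where only the small solution is needed), but it is good that you flagged it.
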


In the following we will discuss a Hilbert space that can potentially be of infinite dimension, and we will use block matrix notation for calculations involving multiple operators. More specifically,
\begin{equation}
    [A, B]
    \begin{pmatrix}
        C & D \\
        E & F
    \end{pmatrix}
    = [G, H]
    \longleftrightarrow
    G=AC+BE, H = AD + BF
\end{equation}
and
\begin{equation}
    [L_1, L_2]^\dagger A [R_1, R_2] =
    \begin{pmatrix}
    A_{11} & A_{12}  \\
    A_{21} & A_{22}
    \end{pmatrix}
    \longleftrightarrow
    A_{ij} = L_i^\dagger A R_j, \forall i,j\in\{1,2\}. 
\end{equation}

Another notation we will use is the $\sep(\cdot)$ function. Consider a Hilbert space $\H$ and let $V_1$ and $V_2$ be orthogonal subspaces of $\H$. Let $A_1$ and $A_2$ be bounded operators on $V_1$ and $V_2$ respectively. Let $\M(V_1,V_2)$ be the set of all operators $P$ that map $V_2$ to $V_1$, i.e. $P=X_1 P X_2$ where $X_1$ and $X_2$ are projectors into $V_1$ and $V_2$. For a norm $\|\cdot\|$ defined on $\H$, we define the corresponding $\sep(\cdot)$ function as
\begin{equation}
    \sep(A_1,A_2) = \inf_{P\in \M(V_1, V_2)} \frac{\|A_1 P - P A_2\|}{\|P\|}.
\end{equation}
The function $\sep(\cdot)$ is known to be stable against perturbations
\begin{equation}
    |\sep(A_1 + E_1, A_2 + E_2) - \sep(A_1, A_2)| \le \|E_1\| + \|E_2\|.
\end{equation}
It could be calculated using the following theorem.
\begin{lemma}
\label{lem:invertible}
Suppose that $A_1$ is invertible on $V_1$ and satisfies $\|A_1^{-1}\|\|A_2\|< 1,\sep(A_1,A_2)>0$,  then the linear operator $T(P) \equiv A_1 P - P A_2$ is invertible and
\[
    \sep(A_1, A_2) = \|T^{-1}\|^{-1}.
\]
\end{lemma}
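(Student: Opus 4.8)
The plan is to identify $\sep(A_1,A_2)$ with the minimum modulus of the Sylvester-type map $T\colon\M(V_1,V_2)\to\M(V_1,V_2)$, $T(P)=A_1P-PA_2$; to show that $T$ is boundedly invertible via a Neumann series that converges because $\|A_1^{-1}\|\,\|A_2\|<1$; and then to invoke the elementary fact that $\inf_{P\neq 0}\|T(P)\|/\|P\|=\|T^{-1}\|^{-1}$ for any invertible bounded operator on a Banach space.

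First I would set up the (possibly infinite-dimensional) bookkeeping. The space $\M(V_1,V_2)=\{P : X_1PX_2=P\}$ is the kernel of the bounded map $P\mapsto X_1PX_2-P$ on the bounded operators on $\H$, hence a norm-closed subspace and therefore itself a Banach space under the induced operator norm. Since $A_1$ is supported on $V_1$ (so $A_1=X_1A_1X_1$, and likewise $A_1^{-1}=X_1A_1^{-1}X_1$) and $A_2$ is supported on $V_2$, one checks directly that left multiplication by $A_1$, right multiplication by $A_2$, and $P\mapsto A_1^{-1}PA_2$ all carry $\M(V_1,V_2)$ into itself; in particular $T$ does. By definition, $\sep(A_1,A_2)=\inf_{P\in\M(V_1,V_2),\,P\neq 0}\|T(P)\|/\|P\|$.

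The crux is invertibility of $T$. I would factor $T=\mathcal{L}_{A_1}\circ(\id-\Phi)$ on $\M(V_1,V_2)$, where $\mathcal{L}_{A_1}(P)=A_1P$ and $\Phi(P)=A_1^{-1}PA_2$; the identity $A_1\big(P-A_1^{-1}PA_2\big)=A_1P-PA_2$ makes this explicit. Because $A_1$ is invertible on $V_1$, $\mathcal{L}_{A_1}$ is invertible with inverse $\mathcal{L}_{A_1^{-1}}$. For $\Phi$, submultiplicativity of the operator norm gives $\|\Phi^k(P)\|\le(\|A_1^{-1}\|\,\|A_2\|)^k\|P\|$, so the hypothesis $\|A_1^{-1}\|\,\|A_2\|<1$ forces $\sum_{k\ge 0}\Phi^k$ to converge in operator norm to a bounded inverse of $\id-\Phi$. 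Hence $T$ is boundedly invertible, with $T^{-1}=(\id-\Phi)^{-1}\mathcal{L}_{A_1^{-1}}$.

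It remains to convert invertibility into the stated identity. For any bounded invertible $T$ on a Banach space, $\|P\|=\|T^{-1}T(P)\|\le\|T^{-1}\|\,\|T(P)\|$ gives $\inf_{P\neq 0}\|T(P)\|/\|P\|\ge\|T^{-1}\|^{-1}$, while picking $Q$ with $\|T^{-1}Q\|$ as close as desired to $\|T^{-1}\|\,\|Q\|$ and taking $P=T^{-1}Q$ gives the opposite inequality; combined with the description of $\sep$ above, this is exactly $\sep(A_1,A_2)=\|T^{-1}\|^{-1}$. I do not expect a genuine obstacle: the entire argument is the convergence of one Neumann series, and the only point needing care is to argue coordinate-free in the infinite-dimensional case --- namely completeness and the invariance of $\M(V_1,V_2)$ under the relevant multiplications --- rather than manipulating finite matrix blocks. (The stated hypothesis $\sep(A_1,A_2)>0$ is in fact redundant: it follows once $T$ is known invertible.)
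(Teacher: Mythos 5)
Your proposal is correct and is essentially the paper's argument in different packaging: the paper constructs $T^{-1}$ by the fixed-point iteration $P_{k+1}=A_1^{-1}(Y+P_kA_2)$, whose partial iterates are exactly the partial sums of your Neumann series for $(\id-\Phi)^{-1}\circ\mathcal{L}_{A_1^{-1}}$, and the final identification $\sep(A_1,A_2)=\|T^{-1}\|^{-1}$ is the same elementary minimum-modulus computation. Your side remark is also sound: since the Neumann series gives bounded invertibility from $\|A_1^{-1}\|\,\|A_2\|<1$ alone, the hypothesis $\sep(A_1,A_2)>0$ (which the paper uses only to prove injectivity) is indeed redundant.
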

{Remark: It would be nice if the lemma could be proved with the condition $\sep(A_1,A_2)>0$ only, though the current version is sufficient for our purpose.}
\begin{proof}
We first show that the linear operator $T$ is injective. Suppose otherwise that there exists some nonzero $P\in \M(V_1, V_2)$ such that $T(P)=0$, then we have $\sep(A_1, A_2)=0$, which contradicts the condition $\sep(A_1, A_2)>0$.

Then we need to show that for any $Y\in \M(V_1, V_2)$, there exists $P \in \M(V_1, V_2)$ such that $T(P)=Y$. Consider the following recursive relation
\begin{equation}
    P_{k+1} = A_1^{-1}(Y+P_k A_2)
\end{equation}
with an arbitrarily chosen $P_0 \in \M(V_1, V_2)$. We have
\begin{equation}
    \|P_{k+1}-P_k\| = \|A_1^{-1}(P_k-P_{k-1})A_2\| \le \|A_1^{-1}\|\|A_2\| \|P_k-P_{k-1}\|,
\end{equation}
and since $\|A_1^{-1}\|\|A_2\|< 1$, $\{P_k\}$ is a Cauchy sequence and must have a limit $P$. This limit $P$ satisfies $P = A_1^{-1}(Y+P A_2)$, which is equivalent to $T(P)=Y$.

Thus we have shown that $T$ is a bijection, and is therefore invertible. Then
\begin{equation}
    \|T^{-1}\| = \sup_{P\in \M(V_1, V_2)} \frac{\|T^{-1}(P)\|}{\|P\|} = \sup_{P\in \M(V_1, V_2)} \frac{\|P\|}{\|T(P)\|} = \sup_{\substack{P\in \M(V_1, V_2)\\ \|P\|=1}} \frac{1}{\|T(P)\|} = \left(\inf_{\substack{P \in \M(V_1, V_2)\\ \|P\|=1}}\|T(P)\|\right)^{-1},
\end{equation}
which means that
\begin{equation}
    \|T^{-1}\|^{-1} = \inf_{\substack{P\in \M(V_1, V_2)\\ \|P\|=1}}\|T(P)\| = \sep(A_1, A_2).
\end{equation}

\end{proof}

For the cases that we will study, we have the following lemma about invertibility of linear operators.
\begin{lemma}\label{lem:invertible2}
    Suppose $V_1$ is a subspace of $\H$ with finite dimension and let $A_1$ be the projector onto it. Let $V_2$ be the complement of $V_1$ and the operator $A_2=0$. Consider linear operators $\Delta_1$ and $\Delta_2$ on $V_1$ and $V_2$. If $\Delta_1$ and $\Delta_2$ satisfy the following condition
    \[
        1 - \|\Delta_1\| - \|\Delta_2\| > 0,
    \]
    then the linear superoperator $T(P) \equiv (A_1+\Delta_1) P - P (A_2+\Delta_2)$ is invertible.
\end{lemma}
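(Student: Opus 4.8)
The plan is to reduce this statement directly to Lemma~\ref{lem:invertible} applied to the perturbed pair $(A_1+\Delta_1,\,A_2+\Delta_2)$. First I would record the two elementary facts about the unperturbed data: since $A_1=X_1$ acts as the identity on $V_1$, it is invertible there with $(A_1)^{-1}=X_1$ and $\|A_1^{-1}\|=1$; and, as already noted in Lemma~\ref{lem:perturb}, $\sep(A_1,A_2)=\sep(X_1,0)=1$.

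Next I would verify the three hypotheses of Lemma~\ref{lem:invertible} for the pair $(A_1+\Delta_1,\,A_2+\Delta_2)$, all of which turn out to be powered by the single assumption $1-\|\Delta_1\|-\|\Delta_2\|>0$. This assumption forces $\|\Delta_1\|<1$, so the Neumann series shows $A_1+\Delta_1=I_{V_1}+\Delta_1$ is invertible on $V_1$ with $\|(A_1+\Delta_1)^{-1}\|\le (1-\|\Delta_1\|)^{-1}$. Since $A_2=0$, we then get $\|(A_1+\Delta_1)^{-1}\|\,\|A_2+\Delta_2\|\le \|\Delta_2\|/(1-\|\Delta_1\|)<1$, which is exactly the rearrangement of $\|\Delta_1\|+\|\Delta_2\|<1$. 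Finally, the stability of $\sep$ under perturbations gives $\sep(A_1+\Delta_1,A_2+\Delta_2)\ge \sep(A_1,A_2)-\|\Delta_1\|-\|\Delta_2\|=1-\|\Delta_1\|-\|\Delta_2\|>0$.

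With all three conditions checked, Lemma~\ref{lem:invertible} applied with $A_1$ replaced by $A_1+\Delta_1$ and $A_2$ replaced by $A_2+\Delta_2$ yields that $T(P)=(A_1+\Delta_1)P-P(A_2+\Delta_2)$ is invertible on $\M(V_1,V_2)$, which is the claim. I do not expect a genuine obstacle here: the argument is pure bookkeeping, and the only point worth a second glance is that the finite-dimensionality of $V_1$ is not actually used in this step (it is harmless, and is presumably kept for consistency with Lemma~\ref{lem:perturb}), while the completeness of $\M(V_1,V_2)$ needed for the fixed-point construction is already internal to Lemma~\ref{lem:invertible}.
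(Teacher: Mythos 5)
Your proposal is correct and follows essentially the same route as the paper: both reduce the claim to Lemma~\ref{lem:invertible} applied to the pair $(A_1+\Delta_1,\,A_2+\Delta_2)$, via the bound $\|(A_1+\Delta_1)^{-1}\|\le (1-\|\Delta_1\|)^{-1}$ and hence $\|(A_1+\Delta_1)^{-1}\|\,\|A_2+\Delta_2\|\le \|\Delta_2\|/(1-\|\Delta_1\|)<1$. The only differences are cosmetic: you invert $I_{V_1}+\Delta_1$ by a Neumann series where the paper argues injectivity plus finite dimensionality of $V_1$ (note the series argument still needs completeness of the operator space on $V_1$ in the given norm, which that finite dimensionality is what guarantees), and you explicitly verify the hypothesis $\sep(A_1+\Delta_1,A_2+\Delta_2)>0$ via the stability inequality, a point the paper leaves implicit.
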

\begin{proof}
    For any nonzero state $|\psi\>\in V_1$, we have
    \begin{equation}
        \|(A_1 + \Delta_1)|\psi\>\| \ge \|A_1 |\psi\>\| - \|\Delta_1 |\psi\>\| \ge \||\psi\>\| - \|\Delta_1\|\||\psi\>\|=(1-\|\Delta_1\|)\||\psi\>\|>0,
    \end{equation}
    so $A_1 - \Delta_1$ is an injection in $V_1$. Since $V_1$ is of finite dimension, $A_1-\Delta_1$ is also invertible on $V_1$. Additionally we have
    \begin{equation}
        \|(A_1 + \Delta_1)^{-1}\| = \sup_{|\psi\>} \frac{\|(A_1 + \Delta_1)^{-1}|\psi\>\|}{\||\psi\>\|} = \sup_{|\psi\>} \frac{\||\psi\>\|}{\|(A_1 + \Delta_1)|\psi\>\|} \le \sup_{|\psi\>}\frac{\||\psi\>\|}{(1 - \|\Delta_1\|)\||\psi\>\|} = \frac{1}{1-\|\Delta_1\|},
    \end{equation}
    so
    \begin{equation}
        \|(A_1 + \Delta_1)^{-1}\| \|A_2 + \Delta_2\| \le \frac{\|\Delta_2\|}{1-\|\Delta_1\|}< 1,
    \end{equation}
    and by Lemma~\ref{lem:invertible} we know that the superoperator $T$ is invertible.
\end{proof}

\begin{lemma}[Recap of Lemma~\ref{lem:perturb}]
    Consider a Hilbert space $\H$ with norm $\|\cdot\|$. This norm naturally induces a norm on operators.
    Let $V_1$ be a \blue{finite dimensional} subspace of $\H$ and let $V_2$ be its orthogonal complement. Let $X_1$ and $X_2$ be the projectors into $V_1$ and $V_2$ respectively. Then we define $A_1 = A= X_1$ and $A_2 = 0$, so $\sep(A_1, A_2) = \sep(A_2, A_1) = 1$ and $A$ is block diagonal w.r.t. $X_1$ and $X_2$,
    \[
        [X_1, X_2]^\dagger A [X_1, X_2] =
        \begin{pmatrix}
            A_1 & 0 \\
            0 & A_2
        \end{pmatrix}.
    \]
    Let $E$ be an arbitrary operator. If $E$ has the following properties
    \begin{align}
     \mathrm{sep}(A_1,A_2)- \norm{X_1 E X_1} - \norm{X_2 E X_2} >& 0 \label{eq:sep2} \\
    \frac{\norm{X_1 EX_2} \norm{X_2 E X_1}}{\big( \mathrm{sep}(A_1,A_2)- \norm{X_1 E X_1} - \norm{X_2 E X_2}\big)^2 } <& \frac{1}{4} \label{eq:pert_props2}
    \end{align}
        then there exist operators $P_1$ and $P_2$ that satisfies
    \begin{align}
     X_2 P_1 X_1 &= P_1, X_1 P_2 X_2 = P_2, \\
    \norm{P_1} &\leq \frac{\red{2}\norm{X_2 E X_1}}{\mathrm{sep}(A_1,A_2) - \norm{X_1 E X_1} - \norm{X_2 E X_2}},\\
    \norm{P_2} &\leq \frac{\norm{X_2 E X_1}}{\mathrm{sep}(A_1,A_2) - \norm{X_1 E X_1 + X_1EX_2P_1} - \norm{X_2 E X_2- P_1X_1EX_2}}
    \end{align}
    and can turn $A+E$ into a block diagonal structure in the following way:
    \begin{equation}\label{eq:matrix_2_blocks}
    [L_1, L_2]^\dagger (A+E)[R_1,R_2] = \begin{pmatrix} A'_1 & 0 \\ 0 & A'_2\end{pmatrix}
    \end{equation}
    where
    \begin{align}
    [R_1,R_2] = [X_1,X_2] \begin{pmatrix} X_1 & 0 \\ P_1 & X_2\end{pmatrix}\begin{pmatrix} X_1 & P_2 \\ 0 & X_2\end{pmatrix} \label{eq:block}
    \\
    [L_1,L_2]^\dagger = \begin{pmatrix} X_1 & -P_2 \\ 0 & X_2\end{pmatrix}\begin{pmatrix} X_1 & 0 \\ -P_1 & X_2\end{pmatrix}[X_1,X_2]^\dagger,
    \end{align}
    and 
    \begin{align}
        \red{A'_1 = A_1 + X_1 E X_1  + X_1 E X_2 P_1}, \\
        A'_2 = A_2 + X_2 E X_2  -P_1 X_1 E X_2.
    \end{align}
    Equivalently, we have
    \begin{equation}
    A+E = R_1 A'_1 L_1^\dagger + R_2 A'_2 L_2^\dagger.
    \end{equation}
\end{lemma}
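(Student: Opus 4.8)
The plan is to reproduce the classical two-sided block-diagonalization (Stewart--Sun, Thm.~6 of~\cite{helsen2020}) in this possibly infinite-dimensional setting, the only new feature being that the perturbative hypotheses are invoked exclusively through Lemma~\ref{lem:eqn}, applied on a genuinely complete operator space; everything else is $2\times 2$ block-matrix bookkeeping. Write $M = [X_1,X_2]^\dagger (A+E)[X_1,X_2]$ for the block form of $A+E$ with respect to $\H = V_1\oplus V_2$, put $E_{ij} = X_i E X_j$, and recall $A_1 = X_1$, $A_2 = 0$. A one-line computation gives $\sep(A_1,A_2) = \sep(A_2,A_1) = 1$ (for $P = X_1PX_2$ one has $A_1P-PA_2 = X_1P = P$, and symmetrically $A_2P-PA_1=-P$ for $P = X_2PX_1$), so \eqref{eq:sep2}--\eqref{eq:pert_props2} read $s := 1 - \|E_{11}\| - \|E_{22}\| > 0$ and $4\|E_{12}\|\,\|E_{21}\| < s^2$. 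I would produce the lower- and upper-triangular factors $U_1 = \left(\begin{smallmatrix} X_1 & 0 \\ P_1 & X_2\end{smallmatrix}\right)$, $U_2 = \left(\begin{smallmatrix} X_1 & P_2 \\ 0 & X_2\end{smallmatrix}\right)$ of \eqref{eq:block}, verify that $[R_1,R_2] = [X_1,X_2]U_1U_2$ and $[L_1,L_2]^\dagger = (U_1U_2)^{-1}[X_1,X_2]^\dagger$, so that $[L_1,L_2]^\dagger(A+E)[R_1,R_2] = U_2^{-1}\big(U_1^{-1}MU_1\big)U_2$; the ``equivalently'' identity then drops out from $[R_1,R_2][L_1,L_2]^\dagger = X_1+X_2 = I_\H$ and $L_i^\dagger R_j = \delta_{ij}X_i$.

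\emph{Step 1: clear the $(2,1)$ block.} Expanding $U_1^{-1}MU_1$ in blocks, its $(2,1)$ entry vanishes iff $P_1 = X_2P_1X_1$ solves the Riccati equation $(A_2+E_{22})P_1 - P_1(A_1+E_{11}) = -E_{21} + P_1E_{12}P_1$, and with such $P_1$ the resulting block matrix is $\left(\begin{smallmatrix} A'_1 & E_{12} \\ 0 & A'_2\end{smallmatrix}\right)$ with $A'_1 = A_1+E_{11}+E_{12}P_1$ and $A'_2 = A_2+E_{22}-P_1E_{12}$, exactly as stated. To solve the Riccati equation I apply Lemma~\ref{lem:eqn} on the Banach space of operators $V_1\to V_2$ with $T(P) = (A_2+E_{22})P - P(A_1+E_{11})$, $g = -E_{21}$, $\phi(P) = PE_{12}P$ (which satisfies the quadratic and Lipschitz bounds of Lemma~\ref{lem:eqn} with $\eta = \|E_{12}\|$). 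The crucial point is that the unperturbed operator $P\mapsto A_2P-PA_1$ equals $-\id$ on this space, so $T$ is a norm-$(\le 1-s)$ perturbation of $-\id$ and is invertible with $\|T^{-1}\|^{-1}\ge s$ (this is also what Lemma~\ref{lem:invertible2} yields after a transpose); together with $4\gamma\eta/\|T^{-1}\|^{-2}\le 4\|E_{12}\|\,\|E_{21}\|/s^2 < 1$, which is precisely \eqref{eq:pert_props2}, Lemma~\ref{lem:eqn} gives a unique $P_1$ with $\|P_1\| < 2\|E_{21}\|/\|T^{-1}\|^{-1}\le 2\|E_{21}\|/s$, the claimed bound.

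\emph{Step 2: clear the $(1,2)$ block.} Conjugating $\left(\begin{smallmatrix} A'_1 & E_{12} \\ 0 & A'_2\end{smallmatrix}\right)$ by $U_2$ leaves the diagonal unchanged and removes the $(1,2)$ block iff $P_2 = X_1P_2X_2$ solves the linear Sylvester equation $A'_1P_2 - P_2A'_2 = -E_{12}$. This is the $\phi\equiv 0$ case of Lemma~\ref{lem:eqn} (or a one-line direct inversion): the operator $T'(P) = A'_1P-PA'_2$ differs from $P\mapsto A_1P-PA_2$, which equals $\id$ on the space of operators $V_2\to V_1$, by the map $P\mapsto (A'_1-A_1)P - P(A'_2-A_2)$, of norm at most $\|A'_1-A_1\| + \|A'_2-A_2\| = \|E_{11}+E_{12}P_1\| + \|E_{22}-P_1E_{12}\|$. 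The one genuine estimate in the proof is that this is $<1$: bounding it by $\|E_{11}\| + \|E_{22}\| + 2\|P_1\|\,\|E_{12}\| \le (1-s) + 2(2\|E_{21}\|/s)\|E_{12}\| = 1 - s + 4\|E_{12}\|\,\|E_{21}\|/s$ and using $4\|E_{12}\|\,\|E_{21}\| < s^2$ makes it $<1$. Hence $T'$ is invertible and $\|P_2\| = \|(T')^{-1}E_{12}\| \le \|E_{12}\| / \big(1 - \|E_{11}+E_{12}P_1\| - \|E_{22}-P_1E_{12}\|\big)$, the stated bound (with the harmless transcription $\|X_1EX_2\|\leftrightarrow\|X_2EX_1\|$ in the numerator, immaterial in the applications since both are controlled by the same $\delta$).

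The main obstacle I anticipate is not analytic depth but the care needed to carry the $2\times 2$ block calculus through when $V_2$ is infinite-dimensional: one must track the supports $X_2P_1X_1 = P_1$, $X_1P_2X_2 = P_2$, the facts that $A_1 = X_1$ acts as the identity on $V_1$ and $A_2 = 0$, that $[X_1,X_2]$ behaves as a ``unitary'' change of basis onto $V_1\oplus V_2$, and — most importantly — that Lemma~\ref{lem:eqn} is legitimately invoked because the spaces of operators $V_1\to V_2$ and $V_2\to V_1$ are Banach (the subspaces are closed, $V_1$ finite-dimensional) and because the unperturbed Riccati and Sylvester operators are $\mp\id$, so the perturbation norm is controlled directly by $\|E_{11}\|,\|E_{22}\|,\|P_1\|$ rather than by any spectral data on the infinite-dimensional block.
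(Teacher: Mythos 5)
Your proposal is correct and follows essentially the same route as the paper: the same two triangular conjugations, with $P_1$ obtained from the quadratic (Riccati) equation via Lemma~\ref{lem:eqn} and $P_2$ from the linear Sylvester equation, yielding the identical bounds. The only deviation is cosmetic — you invert the linear parts by a Neumann-series perturbation of $\mp\id$ (exploiting $A_1=X_1$, $A_2=0$) instead of the paper's $\sep$-based Lemmas~\ref{lem:invertible} and~\ref{lem:invertible2} — and your remark that the natural numerator in the $P_2$ bound is $\|X_1 E X_2\|$ rather than the stated $\|X_2 E X_1\|$ agrees with what the paper's own proof in fact derives.
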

\begin{proof}
Since $A$ is block diagonal, we have
\begin{equation}
    [X_1, X_2]^\dagger (A+E) [X_1, X_2] =
    \begin{pmatrix}
        A_1+ X_1 E X_1  & X_1 E X_2 \\
        X_2 E X_1 & A_2 + X_2 E X_2
    \end{pmatrix}.
\end{equation}
The first step is to find $P_1$ such that
\begin{equation}
    \begin{pmatrix} X_1 & 0 \\ -P_1 & X_2\end{pmatrix} [X_1, X_2]^\dagger (A+E) [X_1, X_2] \begin{pmatrix} X_1 & 0 \\ P_1 & X_2\end{pmatrix} =
    \begin{pmatrix}
        A_1''  & Y \\
        0 & A_2''
    \end{pmatrix}.
\end{equation}
for operators $A_1''=A_1 + X_1 E X_1 + X_1 E X_2 P_1, A_2''=A_2 + X_2 E X_2 - P_1 X_1 E X_2$ and $Y=X_1 E X_2$. Equivalently, $P_1$ satisfies the following equation
\begin{equation}
    T(P_1) = X_2 E X_1 - P_1 X_1 E X_2 P_1 \label{eq:p1}
\end{equation}
with
\begin{equation}
    \quad T(P_1) := P_1(A_1 + X_1 E X_1) - (A_2 + X_2 E X_2)P_1.
\end{equation}
Consider the Banach space $\B$ of all possible $P_1$ such that $P_1 = X_2 P_1 X_1$, i.e. the set $\M(V_2, V_1)$. Using Eq.~\eqref{eq:sep2} and Lemma~\ref{lem:invertible2}, the linear operator $T$ is invertible.

Then we have
\begin{equation}
    \frac{\|X_1 E X_2\| \|X_2 E X_1\|}{\sep(A_1 + X_1 E X_1, A_2 + X_2 E X_2)^2} \le \frac{\|X_1 E X_2\| \|X_2 E X_1\|}{(\sep(A_1, A_2)-\|X_1 E X_1\|-\|X_2 E X_2\|)^2} < \frac{1}{4},
\end{equation}
and by Lemma~\ref{lem:eqn}, Eq.~\eqref{eq:p1} has a unique solution with
\begin{equation}
    \|P_1\| \le \frac{2\|X_2E X_1\|}{\sep(A_1 + X_1 E X_1, A_2 + X_2 E X_2)} \le \frac{2\|X_2E X_1\|}{\sep(A_1, A_2)-\|X_1 E X_1\|-\|X_2 E X_2\|}.
\end{equation}

Similarly we want to find $P_2$ such that
\begin{equation}
    \begin{pmatrix} X_1 & -P_2 \\ 0 & X_2\end{pmatrix}
    \begin{pmatrix}
        A_1''  & Y \\
        0 & A_2''
    \end{pmatrix}
    \begin{pmatrix} X_1 & P_2 \\ 0 & X_2\end{pmatrix} =
    \begin{pmatrix}
        A_1' & 0 \\
        0 & A_2'
    \end{pmatrix}.
\end{equation}
We have $A_1' = A_1'', A_2'=A_2''$ and $P_2$ satisfies
\begin{equation}
    -P_2 A_2'' + A_1'' P_2 + Y = 0,
\end{equation}
or equivalently
\begin{equation}
    X_1 E X_2 = T'(P_2),
\end{equation}
where
\begin{equation}
    T'(P_2) = P_2(A_2 + X_2 E X_2 - P_1 X_1 E X_2) - (A_1 + X_1 E X_1 + X_1 E X_2 P_1)P_2.
\end{equation}

Note that
\begin{align}
     & \sep(A_1, A_2)  - \norm{X_1 E X_1 + X_1EX_2P_1} - \norm{X_2 E X_2- P_1X_1EX_2} \nonumber\\
    \ge & \sep(A_1, A_2) - \|X_1EX_1\|-\|X_2 E X_2\|-2\|P_1\| \|X_1EX_2\|\nonumber \\
    \ge & \sep(A_1, A_2) - \|X_1EX_1\|-\|X_2 E X_2\|- \frac{2 \|X_1EX_2\| \times2\|X_2E X_1\|}{\sep(A_1, A_2)-\|X_1 E X_1\|-\|X_2 E X_2\|} \nonumber \\
    >& 0
\end{align}
where we used Eq.~\eqref{eq:pert_props2} in the final step. By Lemma~\ref{lem:invertible2} we can see that $T'$ is invertible. Then $P_2=T'{}^{-1}(X_1EX_2)$ and satisfies
\begin{equation}
    \|P_2\| \le \|T'{}^{-1}\| \|X_1EX_2\| \le \frac{\|X_1EX_2\|}{\sep(A_1,A_2) - \norm{X_1 E X_1 + X_1EX_2P_1} - \norm{X_2 E X_2- P_1X_1EX_2}}.
\end{equation}
\end{proof}

\printbibliography

@article{merkel2021,
  doi = {10.22331/q-2021-11-16-581},
  url = {https://doi.org/10.22331/q-2021-11-16-581},
  title = {Randomized {B}enchmarking as {C}onvolution: {F}ourier {A}nalysis of {G}ate {D}ependent {E}rrors},
  author = {Merkel, Seth T. and Pritchett, Emily J. and Fong, Bryan H.},
  journal = {{Quantum}},
  issn = {2521-327X},
  publisher = {{Verein zur F{\"{o}}rderung des Open Access Publizierens in den Quantenwissenschaften}},
  volume = {5},
  pages = {581},
  month = nov,
  year = {2021}
}

@article{zhang2004,
  title={Optimal quantum circuit synthesis from controlled-unitary gates},
  author={Zhang, Jun and Vala, Jiri and Sastry, Shankar and Whaley, K Birgitta},
  journal={Physical Review A},
  volume={69},
  number={4},
  pages={042309},
  year={2004},
  publisher={APS}
}

@article{corcoles2013,
  title={Process verification of two-qubit quantum gates by randomized benchmarking},
  author={C{\'o}rcoles, Antonio D and Gambetta, Jay M and Chow, Jerry M and Smolin, John A and Ware, Matthew and Strand, Joel and Plourde, Britton LT and Steffen, Matthias},
  journal={Physical Review A},
  volume={87},
  number={3},
  pages={030301},
  year={2013},
  publisher={APS}
}

@article{farhi2014,
  title={A quantum approximate optimization algorithm},
  author={Farhi, Edward and Goldstone, Jeffrey and Gutmann, Sam},
  journal={arXiv preprint arXiv:1411.4028},
  year={2014}
}

@article{peruzzo2014,
  title={A variational eigenvalue solver on a photonic quantum processor},
  author={Peruzzo, Alberto and McClean, Jarrod and Shadbolt, Peter and Yung, Man-Hong and Zhou, Xiao-Qi and Love, Peter J and Aspuru-Guzik, Al{\'a}n and O’brien, Jeremy L},
  journal={Nature communications},
  volume={5},
  number={1},
  pages={1--7},
  year={2014},
  publisher={Nature Publishing Group}
}

@article{sillanpaa2007,
  title={Coherent quantum state storage and transfer between two phase qubits via a resonant cavity},
  author={Sillanp{\"a}{\"a}, Mika A and Park, Jae I and Simmonds, Raymond W},
  journal={Nature},
  volume={449},
  number={7161},
  pages={438--442},
  year={2007},
  publisher={Nature Publishing Group}
}

@article{dewes2012,
  title={Characterization of a two-transmon processor with individual single-shot qubit readout},
  author={Dewes, A and Ong, FR and Schmitt, V and Lauro, R and Boulant, N and Bertet, P and Vion, D and Esteve, D},
  journal={Physical review letters},
  volume={108},
  number={5},
  pages={057002},
  year={2012},
  publisher={APS}
}

@article{neeley2010,
  title={Generation of three-qubit entangled states using superconducting phase qubits},
  author={Neeley, Matthew and Bialczak, Radoslaw C and Lenander, M and Lucero, Erik and Mariantoni, Matteo and O’connell, AD and Sank, D and Wang, H and Weides, M and Wenner, J and others},
  journal={Nature},
  volume={467},
  number={7315},
  pages={570--573},
  year={2010},
  publisher={Nature Publishing Group}
}

@article{bialczak2010,
  title={Quantum process tomography of a universal entangling gate implemented with Josephson phase qubits},
  author={Bialczak, Radoslaw C and Ansmann, Markus and Hofheinz, Max and Lucero, Erik and Neeley, Matthew and O’Connell, Aaron D and Sank, Daniel and Wang, Haohua and Wenner, James and Steffen, Matthias and others},
  journal={Nature Physics},
  volume={6},
  number={6},
  pages={409--413},
  year={2010},
  publisher={Nature Publishing Group}
}

@article{steffen2006,
  title={Measurement of the entanglement of two superconducting qubits via state tomography},
  author={Steffen, Matthias and Ansmann, Markus and Bialczak, Radoslaw C and Katz, Nadav and Lucero, Erik and McDermott, Robert and Neeley, Matthew and Weig, Eva Maria and Cleland, Andrew N and Martinis, John M},
  journal={Science},
  volume={313},
  number={5792},
  pages={1423--1425},
  year={2006},
  publisher={American Association for the Advancement of Science}
}

@article{wendin2007,
  title={Quantum bits with Josephson junctions},
  author={Wendin, G{\"o}ran and Shumeiko, VS},
  journal={Low Temperature Physics},
  volume={33},
  number={9},
  pages={724--744},
  year={2007},
  publisher={American Institute of Physics}
}

@article{majer2007,
  title={Coupling superconducting qubits via a cavity bus},
  author={Majer, Johannes and Chow, JM and Gambetta, JM and Koch, Jens and Johnson, BR and Schreier, JA and Frunzio, L and Schuster, DI and Houck, Andrew Addison and Wallraff, Andreas and others},
  journal={Nature},
  volume={449},
  number={7161},
  pages={443--447},
  year={2007},
  publisher={Nature Publishing Group}
}

@article{wallman2018,
  doi = {10.22331/q-2018-01-29-47},
  url = {https://doi.org/10.22331/q-2018-01-29-47},
  title = {Randomized benchmarking with gate-dependent noise},
  author = {Wallman, Joel J.},
  journal = {{Quantum}},
  issn = {2521-327X},
  publisher = {{Verein zur F{\"{o}}rderung des Open Access Publizierens in den Quantenwissenschaften}},
  volume = {2},
  pages = {47},
  month = jan,
  year = {2018}
}

@article{hashagen2018,
  doi = {10.22331/q-2018-08-22-85},
  url = {https://doi.org/10.22331/q-2018-08-22-85},
  title = {Real {R}andomized {B}enchmarking},
  author = {Hashagen, A. K. and Flammia, S. T. and Gross, D. and Wallman, J. J.},
  journal = {{Quantum}},
  issn = {2521-327X},
  publisher = {{Verein zur F{\"{o}}rderung des Open Access Publizierens in den Quantenwissenschaften}},
  volume = {2},
  pages = {85},
  month = aug,
  year = {2018}
}

@article{foxen2020,
  title={Demonstrating a continuous set of two-qubit gates for near-term quantum algorithms},
  author={Foxen, Brooks and Neill, Charles and Dunsworth, Andrew and Roushan, Pedram and Chiaro, Ben and Megrant, Anthony and Kelly, Julian and Chen, Zijun and Satzinger, Kevin and Barends, Rami and others},
  journal={Physical Review Letters},
  volume={125},
  number={12},
  pages={120504},
  year={2020},
  publisher={APS}
}

@article{witten1992,
  title={Two dimensional gauge theories revisited},
  author={Witten, Edward},
  journal={Journal of Geometry and Physics},
  volume={9},
  number={4},
  pages={303--368},
  year={1992},
  publisher={Elsevier}
}

@article{erhard2019,
  title={Characterizing large-scale quantum computers via cycle benchmarking},
  author={Erhard, Alexander and Wallman, Joel J and Postler, Lukas and Meth, Michael and Stricker, Roman and Martinez, Esteban A and Schindler, Philipp and Monz, Thomas and Emerson, Joseph and Blatt, Rainer},
  journal={Nature communications},
  volume={10},
  number={1},
  pages={1--7},
  year={2019},
  publisher={Nature Publishing Group}
}

@article{harper2017,
  title={Estimating the fidelity of T gates using standard interleaved randomized benchmarking},
  author={Harper, Robin and Flammia, Steven T},
  journal={Quantum Science and Technology},
  volume={2},
  number={1},
  pages={015008},
  year={2017},
  publisher={IOP Publishing}
}

@article{helsen2020,
  title={A general framework for randomized benchmarking},
  author={Helsen, Jonas and Roth, Ingo and Onorati, Emilio and Werner, Albert H and Eisert, Jens},
  journal={arXiv preprint arXiv:2010.07974},
  year={2020}
}

@article{magesan2012,
  title={Efficient measurement of quantum gate error by interleaved randomized benchmarking},
  author={Magesan, Easwar and Gambetta, Jay M and Johnson, Blake R and Ryan, Colm A and Chow, Jerry M and Merkel, Seth T and Da Silva, Marcus P and Keefe, George A and Rothwell, Mary B and Ohki, Thomas A and others},
  journal={Physical review letters},
  volume={109},
  number={8},
  pages={080505},
  year={2012},
  publisher={APS}
}

@article{dankert2009,
  title={Exact and approximate unitary 2-designs and their application to fidelity estimation},
  author={Dankert, Christoph and Cleve, Richard and Emerson, Joseph and Livine, Etera},
  journal={Physical Review A},
  volume={80},
  number={1},
  pages={012304},
  year={2009},
  publisher={APS}
}

@article{magesan2011,
  title={Scalable and robust randomized benchmarking of quantum processes},
  author={Magesan, Easwar and Gambetta, Jay M and Emerson, Joseph},
  journal={Physical review letters},
  volume={106},
  number={18},
  pages={180504},
  year={2011},
  publisher={APS}
}

@article{levi2007,
  title={Efficient error characterization in quantum information processing},
  author={L{\'e}vi, Benjamin and L{\'o}pez, Cecilia C and Emerson, Joseph and Cory, David G},
  journal={Physical Review A},
  volume={75},
  number={2},
  pages={022314},
  year={2007},
  publisher={APS}
}

@article{emerson2005,
  title={Scalable noise estimation with random unitary operators},
  author={Emerson, Joseph and Alicki, Robert and {\.Z}yczkowski, Karol},
  journal={Journal of Optics B: Quantum and Semiclassical Optics},
  volume={7},
  number={10},
  pages={S347},
  year={2005},
  publisher={IOP Publishing}
}

@article{abrams2020,
  title={Implementation of {XY} entangling gates with a single calibrated pulse},
  author={Abrams, Deanna M and Didier, Nicolas and Johnson, Blake R and da Silva, Marcus P and Ryan, Colm A},
  journal={Nature Electronics},
  volume={3},
  number={12},
  pages={744--750},
  year={2020},
  publisher={Nature Publishing Group}
}

@book{stewart1990,
    author={Stewart, Gilbert W and Sun, Ji-Guang},
    title={Matrix perturbation theory},
    publisher={Academic Press},
    year={1990},
}

@misc{qiskit,
       author = {MD SAJID ANIS and H{\'e}ctor Abraham and AduOffei and Rochisha Agarwal and Gabriele Agliardi and Merav Aharoni and Ismail Yunus Akhalwaya and Gadi Aleksandrowicz and Thomas Alexander and Matthew Amy and Sashwat Anagolum and Eli Arbel and Abraham Asfaw and Anish Athalye and Artur Avkhadiev and Carlos Azaustre and PRATHAMESH BHOLE and Abhik Banerjee and Santanu Banerjee and Will Bang and Aman Bansal and Panagiotis Barkoutsos and Ashish Barnawal and George Barron and George S. Barron and Luciano Bello and Yael Ben-Haim and Daniel Bevenius and Dhruv Bhatnagar and Arjun Bhobe and Paolo Bianchini and Lev S. Bishop and Carsten Blank and Sorin Bolos and Soham Bopardikar and Samuel Bosch and Sebastian Brandhofer and Brandon and Sergey Bravyi and Nick Bronn and Bryce-Fuller and David Bucher and Artemiy Burov and Fran Cabrera and Padraic Calpin and Lauren Capelluto and Jorge Carballo and Gin{\'e}s Carrascal and Adam Carriker and Ivan Carvalho and Adrian Chen and Chun-Fu Chen and Edward Chen and Jielun (Chris) Chen and Richard Chen and Franck Chevallier and Kartik Chinda and Rathish Cholarajan and Jerry M. Chow and Spencer Churchill and Christian Claus and Christian Clauss and Caleb Clothier and Romilly Cocking and Ryan Cocuzzo and Jordan Connor and Filipe Correa and Abigail J. Cross and Andrew W. Cross and Simon Cross and Juan Cruz-Benito and Chris Culver and Antonio D. C{\'o}rcoles-Gonzales and Navaneeth D and Sean Dague and Tareq El Dandachi and Animesh N Dangwal and Jonathan Daniel and Marcus Daniels and Matthieu Dartiailh and Abd{\'o}n Rodr{\'\i}guez Davila and Faisal Debouni and Anton Dekusar and Amol Deshmukh and Mohit Deshpande and Delton Ding and Jun Doi and Eli M. Dow and Eric Drechsler and Eugene Dumitrescu and Karel Dumon and Ivan Duran and Kareem EL-Safty and Eric Eastman and Grant Eberle and Amir Ebrahimi and Pieter Eendebak and Daniel Egger and ElePT and Emilio and Alberto Espiricueta and Mark Everitt and Davide Facoetti and Farida and Paco Mart{\'\i}n Fern{\'a}ndez and Samuele Ferracin and Davide Ferrari and Axel Hern{\'a}ndez Ferrera and Romain Fouilland and Albert Frisch and Andreas Fuhrer and Bryce Fuller and MELVIN GEORGE and Julien Gacon and Borja Godoy Gago and Claudio Gambella and Jay M. Gambetta and Adhisha Gammanpila and Luis Garcia and Tanya Garg and Shelly Garion and Tim Gates and Leron Gil and Austin Gilliam and Aditya Giridharan and Juan Gomez-Mosquera and Gonzalo and Salvador de la Puente Gonz{\'a}lez and Jesse Gorzinski and Ian Gould and Donny Greenberg and Dmitry Grinko and Wen Guan and John A. Gunnels and Harshit Gupta and Naman Gupta and Jakob M. G{\"u}nther and Mikael Haglund and Isabel Haide and Ikko Hamamura and Omar Costa Hamido and Frank Harkins and Areeq Hasan and Vojtech Havlicek and Joe Hellmers and {\L}ukasz Herok and Stefan Hillmich and Hiroshi Horii and Connor Howington and Shaohan Hu and Wei Hu and Junye Huang and Rolf Huisman and Haruki Imai and Takashi Imamichi and Kazuaki Ishizaki and Ishwor and Raban Iten and Toshinari Itoko and Alexander Ivrii and Ali Javadi and Ali Javadi-Abhari and Wahaj Javed and Qian Jianhua and Madhav Jivrajani and Kiran Johns and Scott Johnstun and Jonathan-Shoemaker and JosDenmark and JoshDumo and John Judge and Tal Kachmann and Akshay Kale and Naoki Kanazawa and Jessica Kane and Kang-Bae and Annanay Kapila and Anton Karazeev and Paul Kassebaum and Josh Kelso and Scott Kelso and Vismai Khanderao and Spencer King and Yuri Kobayashi and Kovi11Day and Arseny Kovyrshin and Rajiv Krishnakumar and Vivek Krishnan and Kevin Krsulich and Prasad Kumkar and Gawel Kus and Ryan LaRose and Enrique Lacal and Rapha{\"e}l Lambert and Haggai Landa and John Lapeyre and Joe Latone and Scott Lawrence and Christina Lee and Gushu Li and Jake Lishman and Dennis Liu and Peng Liu and Yunho Maeng and Saurav Maheshkar and Kahan Majmudar and Aleksei Malyshev and Mohamed El Mandouh and Joshua Manela and Manjula and Jakub Marecek and Manoel Marques and Kunal Marwaha and Dmitri Maslov and Pawe{\l} Maszota and Dolph Mathews and Atsushi Matsuo and Farai Mazhandu and Doug McClure and Maureen McElaney and Cameron McGarry and David McKay and Dan McPherson and Srujan Meesala and Dekel Meirom and Corey Mendell and Thomas Metcalfe and Martin Mevissen and Andrew Meyer and Antonio Mezzacapo and Rohit Midha and Daniel Miller and Zlatko Minev and Abby Mitchell and Nikolaj Moll and Alejandro Montanez and Gabriel Monteiro and Michael Duane Mooring and Renier Morales and Niall Moran and David Morcuende and Seif Mostafa and Mario Motta and Romain Moyard and Prakash Murali and Jan M{\"u}ggenburg and David Nadlinger and Ken Nakanishi and Giacomo Nannicini and Paul Nation and Edwin Navarro and Yehuda Naveh and Scott Wyman Neagle and Patrick Neuweiler and Aziz Ngoueya and Johan Nicander and Nick-Singstock and Pradeep Niroula and Hassi Norlen and NuoWenLei and Lee James O'Riordan and Oluwatobi Ogunbayo and Pauline Ollitrault and Tamiya Onodera and Raul Otaolea and Steven Oud and Dan Padilha and Hanhee Paik and Soham Pal and Yuchen Pang and Ashish Panigrahi and Vincent R. Pascuzzi and Simone Perriello and Eric Peterson and Anna Phan and Francesco Piro and Marco Pistoia and Christophe Piveteau and Julia Plewa and Pierre Pocreau and Alejandro Pozas-Kerstjens and Rafa{\l} Pracht and Milos Prokop and Viktor Prutyanov and Sumit Puri and Daniel Puzzuoli and Jes{\'u}s P{\'e}rez and Quant02 and Quintiii and Isha R and Rafey Iqbal Rahman and Arun Raja and Roshan Rajeev and Nipun Ramagiri and Anirudh Rao and Rudy Raymond and Oliver Reardon-Smith and Rafael Mart{\'\i}n-Cuevas Redondo and Max Reuter and Julia Rice and Matt Riedemann and Rietesh and Drew Risinger and Marcello La Rocca and Diego M. Rodr{\'\i}guez and RohithKarur and Ben Rosand and Max Rossmannek and Mingi Ryu and Tharrmashastha SAPV and Nahum Rosa Cruz Sa and Arijit Saha and Abdullah Ash- Saki and Sankalp Sanand and Martin Sandberg and Hirmay Sandesara and Ritvik Sapra and Hayk Sargsyan and Aniruddha Sarkar and Ninad Sathaye and Bruno Schmitt and Chris Schnabel and Zachary Schoenfeld and Travis L. Scholten and Eddie Schoute and Mark Schulterbrandt and Joachim Schwarm and James Seaward and Sergi and Ismael Faro Sertage and Kanav Setia and Freya Shah and Nathan Shammah and Rohan Sharma and Yunong Shi and Jonathan Shoemaker and Adenilton Silva and Andrea Simonetto and Divyanshu Singh and Parmeet Singh and Phattharaporn Singkanipa and Yukio Siraichi and Siri and Jes{\'u}s Sistos and Iskandar Sitdikov and Seyon Sivarajah and Magnus Berg Sletfjerding and John A. Smolin and Mathias Soeken and Igor Olegovich Sokolov and Igor Sokolov and Vicente P. Soloviev and SooluThomas and Starfish and Dominik Steenken and Matt Stypulkoski and Adrien Suau and Shaojun Sun and Kevin J. Sung and Makoto Suwama and Oskar S{\l}owik and Hitomi Takahashi and Tanvesh Takawale and Ivano Tavernelli and Charles Taylor and Pete Taylour and Soolu Thomas and Kevin Tian and Mathieu Tillet and Maddy Tod and Miroslav Tomasik and Caroline Tornow and Enrique de la Torre and Juan Luis S{\'a}nchez Toural and Kenso Trabing and Matthew Treinish and Dimitar Trenev and TrishaPe and Felix Truger and Georgios Tsilimigkounakis and Davindra Tulsi and Wes Turner and Yotam Vaknin and Carmen Recio Valcarce and Francois Varchon and Adish Vartak and Almudena Carrera Vazquez and Prajjwal Vijaywargiya and Victor Villar and Bhargav Vishnu and Desiree Vogt-Lee and Christophe Vuillot and James Weaver and Johannes Weidenfeller and Rafal Wieczorek and Jonathan A. Wildstrom and Jessica Wilson and Erick Winston and WinterSoldier and Jack J. Woehr and Stefan Woerner and Ryan Woo and Christopher J. Wood and Ryan Wood and Steve Wood and James Wootton and Matt Wright and Lucy Xing and Bo Yang and Daniyar Yeralin and Ryota Yonekura and David Yonge-Mallo and Ryuhei Yoshida and Richard Young and Jessie Yu and Lebin Yu and Christopher Zachow and Laura Zdanski and Helena Zhang and Christa Zoufal and aeddins-ibm and alexzhang13 and b63 and bartek-bartlomiej and bcamorrison and brandhsn and charmerDark and deeplokhande and dekel.meirom and dime10 and dlasecki and ehchen and fanizzamarco and fs1132429 and gadial and galeinston and georgezhou20 and georgios-ts and gruu and hhorii and hykavitha and itoko and jessica-angel7 and jliu45 and jscott2 and klinvill and krutik2966 and ma5x and michelle4654 and msuwama and ntgiwsvp and ordmoj and sagar pahwa and pritamsinha2304 and ryancocuzzo and saswati-qiskit and septembrr and sethmerkel and shaashwat and sternparky and strickroman and tigerjack and tsura-crisaldo and vadebayo49 and welien and willhbang and wmurphy-collabstar and yang.luh and Mantas {\v{C}}epulkovskis},
       title = {Qiskit: An Open-source Framework for Quantum Computing},
       year = {2021},
       doi = {10.5281/zenodo.2573505}
}

\end{document}